\DeclareMathAlphabet{\mathpzc}{OT1}{pzc}{m}{it}
\newtheorem{rem}{Remark}[section]
\newcommand{\tr} {\mbox{tr}}
\def\bm{{\mathbf m}}
\def\b1{{\mathbf 1}}
\newtheorem{prop}{Proposition}
\newtheorem{theorem}{Theorem}
\titleformat*{\section}{\normalfont\fontsize{14}{17}\bfseries}
\titleformat*{\subsection}{\normalfont\fontsize{12}{15}\selectfont}
\providecommand{\abs}[1]{\lvert#1\rvert}
\providecommand{\norm}[1]{\lVert#1\rVert}
\date{}
\begin{document}

\title{\LARGE Nonparametric multivariate regression estimation for circular responses}\normalsize
\author{Andrea Meil\'an-Vila \\
Universidade da Coru\~{n}a\thanks{%
Research group MODES, CITIC, Department of Mathematics, Faculty of Computer Science, Universidade da Coru\~na, Campus de Elvi\~na s/n, 15071,
A Coru\~na, Spain}
\and %
Mario Francisco-Fern\'andez\\
Universidade da Coru\~{n}a\footnotemark[1]
\and Rosa M. Crujeiras \\
Universidade de Santiago de Compostela\thanks{Department of Statistics, Mathematical Analysis and Optimization, Faculty of Mathematics, Universidade de Santiago de Compostela, R\'ua Lope G\'omez de Marzoa s/n,
	15782, Santiago de Compostela, Spain}
\and
Agnese Panzera\\
Università degli Studi di Firenze\thanks{Dipartimento di Statistica, Informatica, Applicazioni ``G. Parenti'', Università degli Studi di Firenze,  Viale Morgagni, 59, 50134, Firenze, Italy}
}
\maketitle


\begin{abstract}
	Nonparametric estimators of a regression function with circular response and $\mathbb{R}^d$-valued predictor are considered in this work. Local polynomial type estimators are proposed and studied. Expressions for their asymptotic biases and variances are derived, and some guidelines to select
	asymptotically local optimal bandwidth matrices are also given.
	The finite sample behavior of the proposed estimators is
	assessed through simulations and their performance is also illustrated with a real data set.

\end{abstract}	
\textit{Keywords:} {linear-circular regression, multivariate regression, local polynomial estimators}
\section{Introduction}
\label{intro}

New challenges on regression modeling appear when trying to describe relations between variables and at least some of them do not belong to an Euclidean space. For example, in many situations, one can be interested in estimating regression curves where some or all of the involved variables are \emph{circular} ones. The special nature of circular data (points on the unit circle; angles in $\mathbb{T}=[0,2\pi)$) relies on their \emph{periodicity}, which requires \emph{ad hoc} statistical methods to analyze them. Circular statistics is an evolving discipline, and several statistical techniques for linear data now may claim their circular analogues. Comprehensive reviews on circular statistics (or more general, directional data) are provided in \cite{fisher1995statistical}, \cite{jammalamadaka2001topics} or \cite{mardia2009directional}. Some recent advances in directional statistics are collected in \cite{ley2017modern}. Examples of circular data arise in many scientific fields such as biology, studying animal orientation  \citep{batschelet1981circular}, environmental applications \citep{sengupta2006asymmetric}, or oceanography \citep[as in][among others]{wang2015joint}. In this setting, when the circular variable is supposed to vary with respect to other covariates and the goal is to model such a relation, regression estimators for circular responses must be designed and analyzed.

Parametric approaches were originally considered in \cite{Fisher1992regression} and \cite{presnell1998projected}, assu\-ming a parametric (conditional) distribution model for the circular response. In this scenario, covariates are supposed to influence the response via the parameters of the conditional distribution (e.g. through the location parame\-ter, as the simplest case, or through location and concentration, if a von Mises distribution is chosen). In a practical setting,  in \cite{scapini2002multiple}, the orientation of two species of sand hoppers, considering parametric multiple regre\-ssion methods for circular responses, following the proposal in \cite{presnell1998projected}, is analyzed. A parametric multivariate circular regression problem was also studied in \cite{kim2017multivariate}. Beyond parametric restrictions, flexible approaches are also feasible in this context, just imposing some regularity conditions on the regression function, but avoiding the assumption of a specific parametric family for the regression function or for the conditional distribution. Local estimators of the regression function for circular response and a single real-valued covariate were introduced in \cite{di2013non}. The authors proposed local estimators for the regression function which are defined as the inverse tangent function of the ratio between two sample statistics, obtained as weighted sums of the sines and the cosines of the response, respectively.

In the present work, a regression model with circular response and $\mathbb{R}^d$-valued predictor is considered. When the response variable is circular, the usual target regression function (derived from a cosine risk measure) is given by the inverse tangent function of the ratio between the conditional expectation of the sine and the conditional expectation of the cosine of the response variable.  In this context, nonparametric regression estimators are proposed and studied. Our proposal considers two (separate)  regression models for the sine and cosine components, which are indeed regression models with real-valued res\-ponses. Then, nonparametric estimators for the regression function at hand are obtained by computing the inverse tangent function of the ratio of multivaria\-te  local polynomial estimators for the two  regression functions of the sine and cosine models. This way,  estimators are obtained generalizing, both for higher dimensions and for higher polynomial degrees, the structure of the estimators proposed in \cite{di2013non} for a linear-circular regression function. The approach of considering two flexible regression models for the sine and cosine components has been also explored in \cite{jamma&sarma93}, where the objective is the estimation of a circular-circular regression function. In this case, the conditional expectations of the sine and the cosine of the response  are approximated by trigonometric polynomials of a suita\-ble degree.  A similar approach has been also considered in \cite{di2014non}, where the problem of nonparametrically estimating a spherical-spherical regression is addressed  as a multi-output regression problem. In this case,  each Cartesian coordinate of the spherical regression function is separately estimated, within a scheme of a regression with a linear response and a spherical predictor. A  multivariate angular regression model for both angular and linear predictors was studied by \cite{rivest2016general}. Maximum likelihood estimators for the parameters were derived under two von Mises error structures.

This paper is organized as follows. In Section \ref{sec:model}, the multivariate linear-circular regression model considered in this paper is presented, jointly with the models for the sine and cosine components, establishing certain relations between their first and second order moments. In Section \ref{sec:est}, the nonparametric estimators of the regression function are proposed. Section \ref{sec:NW} and Section \ref{sec:LL} contain the Nadaraya--Watson (NW) and local linear (LL) versions of these estimators, respectively, and include expressions for their asymptotic biases and variances. A local polynomial type estimator with a general degree $p$, for  the univaria\-te case $(d=1)$, is also analyzed in Section \ref{sec:high}. The finite sample performance of the estimators is assessed through a simulation study, provided in Section \ref{sec:sim}. Finally, Section \ref{sec:example} shows a real data application about sand hoppers orientation.

\section{The regression model with circular response}  
\label{sec:model}

Let $\{(\mathbf{X}_i,\Theta_i) \}_{i=1}^{n}$ be a random sample from  $(\mathbf{X},\Theta)$, where $\Theta$ is a circular random variable taking values on $\mathbb{T}=[0,2\pi)$, and $\mathbf{X}$ is a random variable with density $f$ supported on $D\subseteq\mathbb{R}^d$. Assume that $\Theta$ and $\mathbf{X}$ are related through the following regression model:
\begin{equation}\label{model}
\Theta_i=[m(\mathbf{X}_i)+{\varepsilon}_i](\mbox{\texttt{mod}} \, 2\pi), \quad i=1,\dots,n,
\end{equation} 
where $m$ is a \emph{circular regression} function, and the ${\varepsilon}_i$ are  independent and identically distributed (i.i.d.) random angles  (independent of the $X_i$) with zero mean direction and  finite concentration. This implies that ${\rm E}[\sin (\varepsilon)\mid\mathbf{X}=\mathbf{x}]=0$ and $\ell(\mathbf{x}) ={\rm E}[\cos (\varepsilon)\mid\mathbf{X}=\mathbf{x}]< \infty$. Additionally, assume that $\sigma^2_1(\mathbf{x})={\rm Var}[\sin(\varepsilon)\mid\mathbf{X}=\mathbf{x}]<\infty$,
$\sigma^2_2(\mathbf{x})={\rm Var}[\cos(\varepsilon)\mid\mathbf{X}=\mathbf{x}]<\infty$ and
$\sigma_{12}(\mathbf{x})={\rm E}[\sin(\varepsilon)\cos(\varepsilon)\mid\mathbf{X}=\mathbf{x}]<\infty$.
In equation (\ref{model}), \texttt{mod} stands for the modulo operation.

The circular regression function $m$ in model (\ref{model}) is the conditional mean direction of $\Theta$ given $\mathbf X$ which, at a point $\mathbf x$, can be defined as the minimizer of the  risk ${\rm E}\{1-\cos[\Theta-m(\mathbf{X})]\mid \mathbf X=\mathbf{x}\}$,  which is comparable to the $L_2$ risk in the circular setting.  Specifically, the minimizer of this cosine risk  is given by $m(\mathbf{x})=\mbox{atan2}[m_1(\mathbf{x}),m_2(\mathbf{x})]$, where  $m_1(\mathbf{x})={\rm E}[\sin(\Theta)\mid\mathbf{X}=\mathbf{x}]$ and $m_2(\mathbf{x})={\rm E}[\cos(\Theta)\mid\mathbf{X}=\mathbf{x}]$, and the function $\mbox{atan2}(y,x)$ returns the angle between the
$x$-axis and the vector from the origin to $(x, y)$. Then, replacing $m_1$ and $m_2$ by appropriate estimators, an estimator for $m$ can be directly obtained. In particular, a whole class of kernel-type estimators for $m$ at $\mathbf x\in D$ can be defined by considering local polynomial estimators for $m_1(\mathbf x)$ and $m_2(\mathbf x)$. Specifically,  estimators of the form:
\begin{equation}\label{est}
\hat{m}_{\mathbf{H}}(\mathbf{x};p)=\mbox{atan2}[\hat{m}_{1, \mathbf{H}}(\mathbf{x};p),\hat{m}_{2, \mathbf{H}}(\mathbf{x};p)]
\end{equation}
are considered, where for any integer $p\geq 0$, $\hat m_{1, \mathbf{H}}(\mathbf{x};p)$ and $\hat m_{2, \mathbf{H}}(\mathbf{x};p)$ denote  the $p$th order local polynomial estimators (with bandwidth matrix $\mathbf H$) of $ m_1(\mathbf{x})$ and $m_2(\mathbf{x})$, respectively. The special cases $p=0$ and $p=1$ yield a NW (or local constant) type estimator and  a LL type estimator of $m(\mathbf x)$, respectively.

Notice that the proposed approach amounts to consider two (separate)  regression models for the sine and cosine of $\Theta$ on $\mathbf{X}$. In particular,  the
following  regression models for the sine component:
\begin{equation}
\sin(\Theta_i)=m_1(\mathbf{X}_i)+\xi_i\quad i=1,\dots,n,\label{model1}
\end{equation}
and the cosine component: 
\begin{equation}
\cos(\Theta_i)=m_2(\mathbf{X}_i)+\zeta_i\quad i=1,\dots,n,\label{model2}
\end{equation}
are considered, where the $\xi_i$ and the $\zeta_i$  are i.i.d. error terms, satisfying ${\rm E}[\xi\mid\mathbf{X}=\mathbf{x}]={\rm E}[\zeta \mid\mathbf{X}=\mathbf{x}]=0$, $s_1^2(\mathbf{x})={\rm Var}[\xi\mid\mathbf{X}=\mathbf{x}]<\infty$, $s_2^2(\mathbf{x})={\rm Var}[\zeta\mid\mathbf{X}=\mathbf{x}]<\infty$ and $c(\mathbf{x})={\rm E}[\xi\zeta\mid\mathbf{X}=\mathbf{x}]<\infty$  at every $\mathbf{x}\in D$.

Using the sine and cosine addition formulas, it is easy to derive some equations relating certain functions referred to model (\ref{model}), and to models (\ref{model1}) and (\ref{model2}). Specifically, defining
$f_1(\mathbf{x})=\sin [m(\mathbf{x})]$ and $f_2(\mathbf{x})=\cos [m(\mathbf{x})]$, it holds that:
\[
m_1(\mathbf{x})=f_1(\mathbf{x})\ell(\mathbf{x})\quad\mbox{and}\quad
m_2(\mathbf{x})=f_2(\mathbf{x})\ell(\mathbf{x}).
\]
Note that $f_1(\mathbf{x})$ and $f_2(\mathbf{x})$ correspond to the \emph{normalized versions} of $m_1(\mathbf{x})$ and $m_2(\mathbf{x})$, respectively. Indeed, taking into account that  $f_2^2(\mathbf{x})+f_1^2(\mathbf{x})=1$, it can be easily deduced that $\ell(\mathbf{x})=[m^2_1(\mathbf{x})+m_2^2(\mathbf{x})]^{1/2}$. Hence,   $\ell(\mathbf{x})$ amounts to the mean resultant length of $\Theta$ given $\mathbf{X}=\mathbf{x}$, which, taking into account that ${\rm E}[\sin(\varepsilon)\mid \mathbf{X}=\mathbf{x}]=0$ is assumed, also corresponds to the mean resultant length of $\varepsilon$ given $\mathbf{X}=\mathbf{x}$. Additionally, the following explicit expressions for the conditional variances of the error terms involved in models  (\ref{model1}) and (\ref{model2}) can be obtained:
\begin{eqnarray*}
	s_1^2(\mathbf{x})&=&f_1^2(\mathbf{x})\sigma^2_2(\mathbf{x})+2f_1(\mathbf{x})f_2(\mathbf{x})\sigma_{12}(\mathbf{x})+f_2^2(\mathbf{x})\sigma^2_1(\mathbf{x}),\\
	s_2^2(\mathbf{x})&=&f_2^2(\mathbf{x})\sigma^2_2(\mathbf{x})-2f_2(\mathbf{x})f_1(\mathbf{x})\sigma_{12}(\mathbf{x})+f_1^2(\mathbf{x})\sigma^2_1(\mathbf{x}),
\end{eqnarray*}
as well as for the covariance between the error terms in (\ref{model1}) and (\ref{model2}):
\[
c(\mathbf{x})=f_1(\mathbf{x})f_2(\mathbf{x})\sigma^2_2(\mathbf{x})-f_1^2(\mathbf{x})\sigma_{12}(\mathbf{x})+f_2^2(\mathbf{x})\sigma_{12}(\mathbf{x})-f_1(\mathbf{x})f_2(\mathbf{x})\sigma^2_1(\mathbf{x}).
\]

In what follows, $\bm{\nabla}g(\mathbf{x})$ and $\bm{\mathcal{H}}_{g}(\mathbf{x})$ will denote the vector of first-order partial derivatives and  the Hessian matrix of a sufficiently smooth function $g$ at $\mathbf{x}$, res\-pectively. Moreover, for a vector $\mathbf{u}= (u_1,\dots,u_d)^T$ and an integrable function $g$, the multiple
integral $\int\int\dots\int g(\mathbf{u})du_1du_2\dots du_d$ will  be simply denoted as $\int g(\mathbf{u})d\mathbf{u}$. Finally, for any
matrix $\mathbf A$, $\abs{\mathbf{A}}$, $\tr(\mathbf{A})$, $\lambda_{{\max}}(\mathbf{A})$ and $\lambda_{{\min}}(\mathbf{A})$  denote its determinant,
trace, maximum eigenvalue and minimum eigenvalue, respectively.
\section{Properties of kernel-type estimators}
\label{sec:est}
Asymptotic (conditional) bias and variance of the estimator given in (\ref{est}) are derived in this section. 
We will focus on the cases in which $p=0$ and $p=1$. For this, the asymptotic properties of the corresponding NW and LL estimators of $m_j(\mathbf{x})$, $j=1,2$ are firstly recalled. These results are then used to obtain  the asymptotic properties of the estimator presented in (\ref{est}) with polynomial degrees $p=0$ and $p=1$. Finally,  asymptotic properties of local polynomial estimators with  arbitrary order $p$ and $D\subseteq \mathbb{R}$ are also studied.

\subsection{Nadaraya--Watson type estimator}
\label{sec:NW}

Considering models  (\ref{model1}) and (\ref{model2}), local constant estimators for the regre\-ssion functions $m_j$, $j=1,2$, at a given point $\mathbf{x}\in D\subseteq \mathbb{R}^d$, are respectively defined as:
\begin{equation}
\label{estNW}\hat m_{j, \mathbf{H}}(\mathbf{x};0)=\left\{\begin{array}{lc}\dfrac{\sum_{i=1}^n K_{\mathbf{H}}(\mathbf{X}_i-\mathbf{x})\sin(\Theta_i)}{\sum_{i=1}^n K_{\mathbf{H}}(\mathbf{X}_i-\mathbf{x})}&\text{if $j=1$},\\\\ \dfrac{\sum_{i=1}^n K_{\mathbf{H}}(\mathbf{X}_i-\mathbf{x})\cos(\Theta_i)}{\sum_{i=1}^n K_{\mathbf{H}}(\mathbf{X}_i-\mathbf{x})}&\text{if $j=2$},\end{array}\right.
\end{equation}
where, for $\mathbf{u}\in \mathbb{R}^d$, $K_\mathbf{H}(\mathbf{u})=\abs{\mathbf{H}}^{-1}K(\mathbf{H}^{-1}\mathbf{u})$ is the rescaled version of a $d$-variate kernel function $K$, and $\mathbf{H}$ is a $d\times d$ bandwidth matrix. The resulting estimator $\hat m_{\mathbf H}(\mathbf x;0)$ of $m(\mathbf{x})$, obtained by plugging (\ref{estNW}) in (\ref{est}), corresponds to the multivariate version of the local constant estimator proposed in \cite{di2013non}.

Next, the asymptotic conditional  bias and variance expressions for $\hat{m}_{\mathbf{H}}(\mathbf{x};0)$ are derived. First, using standard theory on the multivariate NW estimator \cite{hardle_muller}, the asymptotic conditional bias and variance of $\hat m_{j, \mathbf{H}}(\mathbf{x};0)$, $j=1,2$, are obtained. This preliminary result is given in Proposition \ref{pro1}. The following assumptions on the  design density, the kernel function and the bandwidth matrix are required.

\begin{enumerate}
	\item [(A1)] The design density $f$ is continuously differentiable at $\mathbf{x}\in D$, and satis\-fies  $f(\mathbf{x})>0$. Moreover,   $s_j^2$ and all second-order derivatives  of the regression functions $m_j$, for $j=1,2$,  are continuous at $\mathbf{x}\in D$, and $s_j^2(\mathbf{x})>0$. 
	\item [(A2)] The kernel $K$ is a spherically symmetric density function, twice conti\-nuously differentiable and with compact support (for simplicity with a nonzero value only if $\norm{\mathbf{u}}\le 1$). Moreover, $\int \mathbf{u}\mathbf{u}^T K(\mathbf{u})d\mathbf{u}=\mu_2(K)\mathbf{I}_d$, where $\mu_2(K)\neq 0$ and $\mathbf{I}_d$ denotes the $d\times d$ identity matrix. It is also assumed that $R(K)=\int K^2(\mathbf{u})d\mathbf u<\infty$. 
	%
	\item [(A3)]	The bandwidth matrix $\mathbf{H}$ is symmetric and positive definite, with $\mathbf{H}\to 0$ and  $n\abs{\mathbf{H}}\to\infty$, as $n\to\infty$.
\end{enumerate}

In assumption (A3), $\textbf{H}\to 0$ means that every entry of \textbf{H} goes to $0$. Notice that, since \textbf{H} is symmetric and positive definite, $\textbf{H}\to 0$ is equivalent to $\lambda_{{\max}}(\textbf{H})\to 0$. $\abs{\textbf{H}}$ is a quantity of order $\mathcal{O}\left[\lambda_{{\max}}^d(\textbf{H})\right]$ since $\abs{\textbf{H}}$ is equal to the product of all eigenvalues of $\textbf{H}$.

\begin{prop}
	\label{pro1}
	Given the random sample $\{(\mathbf{X}_i,\Theta_i)\}_{i=1}^n$  from a density su\-pported on $D\times \mathbb{T}$, assume models $(\ref{model1})$ and $(\ref{model2})$.  Under assumptions $({\rm A}1)$--$({\rm A}3)$, the asymptotic conditional bias  of estimators $\hat m_{j, \mathbf{H}}(\mathbf{x};0)$, for $j=1,2$, at a point $\mathbf{x}$ in the interior of the support of $f$, is:
	\begin{eqnarray}\label{expp0}{\rm E}[\hat m_{j, \mathbf{H}}(\mathbf{x};0)- m_j(\mathbf{x})\mid \mathbf{X}_1,\ldots,\mathbf{X}_n]\nonumber&=&\frac{1}{2}\mu_2(K)\tr(\mathbf{H}^2\bm{\mathcal{H}}_{m_j}(\mathbf{x}))\\&&+  \frac{\mu_2(K)}{f(\mathbf{x})}\bm{\nabla}^T{m_j}(\mathbf{x})\mathbf{H}^2\bm{\nabla}f(\mathbf{x})\nonumber\\&&+  \mathpzc{o}_{\mathbb{P}}[\tr(\mathbf{H}^2)],\end{eqnarray}
	and the conditional variance is:
	\begin{equation}\label{varp0} {\rm Var}[\hat m_{j, \mathbf{H}}(\mathbf{x};0)\mid \mathbf{X}_1,\ldots,\mathbf{X}_n]=\frac{R(K)s_j^2(\mathbf{x})}{n \abs{\mathbf{H}}f(\mathbf{x})}+\mathpzc{o}_{\mathbb{P}}\left(\frac{1}{n \abs{\mathbf{H}}}\right).
	\end{equation}
\end{prop}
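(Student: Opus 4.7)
The plan is to exploit the ratio form of the Nadaraya--Watson estimator. Writing $\hat m_{j,\mathbf{H}}(\mathbf{x};0)=\hat r_{j,\mathbf{H}}(\mathbf{x})/\hat f_{\mathbf{H}}(\mathbf{x})$, with $\hat r_{j,\mathbf{H}}(\mathbf{x})=n^{-1}\sum_{i=1}^n K_{\mathbf{H}}(\mathbf{X}_i-\mathbf{x})Y_{ji}$ (where $Y_{1i}=\sin\Theta_i$ and $Y_{2i}=\cos\Theta_i$) and $\hat f_{\mathbf{H}}(\mathbf{x})=n^{-1}\sum_{i=1}^n K_{\mathbf{H}}(\mathbf{X}_i-\mathbf{x})$, the denominator becomes nonrandom after conditioning on the design. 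Using ${\rm E}[Y_{ji}\mid\mathbf{X}_i]=m_j(\mathbf{X}_i)$, ${\rm Var}[Y_{ji}\mid\mathbf{X}_i]=s_j^2(\mathbf{X}_i)$, and the conditional independence of the $Y_{ji}$ (all from models (\ref{model1})--(\ref{model2})), the conditional bias reduces to $[n^{-1}\sum_i K_{\mathbf{H}}(\mathbf{X}_i-\mathbf{x})\{m_j(\mathbf{X}_i)-m_j(\mathbf{x})\}]/\hat f_{\mathbf{H}}(\mathbf{x})$ and the conditional variance to $[n^{-2}\sum_i K_{\mathbf{H}}^2(\mathbf{X}_i-\mathbf{x})s_j^2(\mathbf{X}_i)]/\hat f_{\mathbf{H}}(\mathbf{x})^2$. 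The task then reduces to approximating these random sums by their expectations.

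For the bias numerator, the change of variables $\mathbf{u}=\mathbf{H}^{-1}(\mathbf{y}-\mathbf{x})$ turns the expected value into $\int K(\mathbf{u})[m_j(\mathbf{x}+\mathbf{H}\mathbf{u})-m_j(\mathbf{x})]f(\mathbf{x}+\mathbf{H}\mathbf{u})d\mathbf{u}$. I would Taylor-expand $m_j$ to second order (permitted by (A1)) and $f$ to first order, multiply out, and use the spherical symmetry of $K$ in (A2) to kill the first-order contributions; the identity $\int\mathbf{u}\mathbf{u}^T K(\mathbf{u})d\mathbf{u}=\mu_2(K)\mathbf{I}_d$ then collapses the surviving quadratic pieces into $\tfrac{1}{2}\mu_2(K)f(\mathbf{x})\tr[\mathbf{H}^2\bm{\mathcal{H}}_{m_j}(\mathbf{x})]$ and $\mu_2(K)\bm{\nabla}^T m_j(\mathbf{x})\mathbf{H}^2\bm{\nabla} f(\mathbf{x})$, invoking $\mathbf{H}=\mathbf{H}^T$ to rewrite $\mathbf{H}^T\bm{\mathcal{H}}_{m_j}\mathbf{H}$ as $\bm{\mathcal{H}}_{m_j}\mathbf{H}^2$ inside the trace. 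A Chebyshev bound using $n\abs{\mathbf{H}}\to\infty$ from (A3) replaces the random sum by its expectation up to an $\mathpzc{o}_{\mathbb{P}}[\tr(\mathbf{H}^2)]$ error. Since $\hat f_{\mathbf{H}}(\mathbf{x})=f(\mathbf{x})+\mathpzc{o}_{\mathbb{P}}(1)$ by the same change-of-variables argument (now requiring only continuity of $f$), dividing yields (\ref{expp0}).

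The variance calculation is analogous but simpler. The same change of variables and continuity of $s_j^2$ and $f$ at $\mathbf{x}$ give ${\rm E}[K_{\mathbf{H}}^2(\mathbf{X}-\mathbf{x})s_j^2(\mathbf{X})]=\abs{\mathbf{H}}^{-1}R(K)s_j^2(\mathbf{x})f(\mathbf{x})+\mathpzc{o}(\abs{\mathbf{H}}^{-1})$. A second-moment bound shows that $n^{-1}\sum_i K_{\mathbf{H}}^2(\mathbf{X}_i-\mathbf{x})s_j^2(\mathbf{X}_i)$ concentrates around this expectation; together with the extra $n^{-1}$ outside and the convergence $\hat f_{\mathbf{H}}(\mathbf{x})^2\to f(\mathbf{x})^2$ in probability, this yields (\ref{varp0}).

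The only mildly delicate point is the bookkeeping when replacing $\hat f_{\mathbf{H}}(\mathbf{x})$ by $f(\mathbf{x})$ in the denominators: the identity $1/\hat f_{\mathbf{H}}(\mathbf{x})=1/f(\mathbf{x})+\mathpzc{o}_{\mathbb{P}}(1)$ must be combined with the deterministic orders of the numerators so that the error terms inherit the claimed rates $\mathpzc{o}_{\mathbb{P}}[\tr(\mathbf{H}^2)]$ and $\mathpzc{o}_{\mathbb{P}}((n\abs{\mathbf{H}})^{-1})$ rather than larger ones. Beyond that care, the argument is routine multivariate Nadaraya--Watson theory as in \cite{hardle_muller}, here phrased for a general symmetric positive definite bandwidth matrix rather than a diagonal one.
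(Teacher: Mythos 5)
Your argument is correct and is exactly the standard multivariate Nadaraya--Watson derivation (ratio decomposition, conditioning on the design, change of variables, second-order Taylor expansion with spherical symmetry of $K$, and concentration of the kernel sums) that the paper itself invokes: the paper gives no proof of Proposition \ref{pro1}, only a citation to \cite{hardle_muller}. The one point worth making explicit if you write this up fully is the bookkeeping you already flag, namely that the stochastic fluctuation of the bias numerator and the replacement of $\hat f_{\mathbf{H}}(\mathbf{x})$ by $f(\mathbf{x})$ must be shown to contribute only $\mathpzc{o}_{\mathbb{P}}[\tr(\mathbf{H}^2)]$ and $\mathpzc{o}_{\mathbb{P}}\left(\frac{1}{n\abs{\mathbf{H}}}\right)$ respectively, which follows from the moment bounds you describe under $({\rm A}3)$.
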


Now, using expressions (\ref{expp0}) and (\ref{varp0}), the following theorem provides the asymptotic conditional bias and the asymptotic conditional variance of the estimator $\hat m_{\mathbf{H}}(\mathbf{x};0)$.  Its proof is included in the final Appendix.

\begin{theorem}\label{teoNW}
	Given the random sample $\{(\mathbf{X}_i,\Theta_i)\}_{i=1}^n$  from a density su\-pported on $D\times \mathbb{T}$, assume model $(\ref{model})$. Then, under assumptions $({\rm A}1)$--$({\rm A}3)$, the asympto\-tic conditional  bias of estimator  $\hat{m}_{\mathbf{H}}(\mathbf{x};0)$, at  a fixed interior point $\mathbf{x}$ in the support of $f$, is given by:
	\begin{eqnarray*}
		{\rm E}[\hat{m}_{\mathbf{H}}(\mathbf{x};0)-m(\mathbf{x})\mid\mathbf{X}_1,\dots,\mathbf{X}_n]\nonumber&=&\dfrac{1}{2}\mu_2(K){\rm tr}[\mathbf{H}^2{\bm{\mathcal{H}}}_{m}(\mathbf{x})]\\&&+ \dfrac{\mu_2(K)}{\ell(\mathbf{x})f(\mathbf{x})}{\bm{\nabla}} ^Tm(\mathbf{x})\mathbf{H}^2{\bm{\nabla}}  (\ell f)(\mathbf{x})\nonumber\\&&+ {o}_{\mathbb{P}}[{\rm tr}(\mathbf{H}^2)],
	\end{eqnarray*}
	and the asymptotic conditional variance is:
	$$
	{\rm Var}[\hat{m}_{\mathbf{H}}(\mathbf{x};0)\mid\mathbf{X}_1,\dots,\mathbf{X}_n]=\dfrac{R(K)\sigma^2_1(\mathbf{x})}{n\abs{\mathbf{H}}\ell^2(\mathbf{x})f(\mathbf{x})}+\mathpzc{o}_{\mathbb{P}}\left(\dfrac{1}{n\abs{\mathbf{H}}}\right).
	$$
\end{theorem}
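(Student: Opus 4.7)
My plan is to linearise the function $G(u,v)=\mbox{atan2}(u,v)$ around $(u,v)=(m_1(\mathbf x),m_2(\mathbf x))$, apply Proposition \ref{pro1} componentwise to $\hat m_{1,\mathbf H}$ and $\hat m_{2,\mathbf H}$, and then use the identities $m_1=\sin(m)\ell$ and $m_2=\cos(m)\ell$ from Section \ref{sec:model} to rewrite everything in terms of $m$, $\ell$ and $f$. Since $\ell(\mathbf x)=[m_1^2(\mathbf x)+m_2^2(\mathbf x)]^{1/2}>0$ (the error distribution has positive concentration), $G$ is smooth in a neighbourhood of $(m_1(\mathbf x),m_2(\mathbf x))$, with partial derivatives $\cos(m)/\ell$ and $-\sin(m)/\ell$ at that point. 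Proposition \ref{pro1} together with (A3) yields $\hat m_{j,\mathbf H}(\mathbf x;0)-m_j(\mathbf x)=o_\mathbb{P}(1)$, so a second-order Taylor expansion gives
\[
\hat m_{\mathbf H}(\mathbf x;0)-m(\mathbf x)=\frac{\cos(m(\mathbf x))}{\ell(\mathbf x)}[\hat m_{1,\mathbf H}(\mathbf x;0)-m_1(\mathbf x)]-\frac{\sin(m(\mathbf x))}{\ell(\mathbf x)}[\hat m_{2,\mathbf H}(\mathbf x;0)-m_2(\mathbf x)]+R,
\]
where the quadratic remainder $R$ has stochastic order $\mathcal{O}_\mathbb{P}\{[\tr(\mathbf H^2)]^2+(n|\mathbf H|)^{-1}\}$.

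For the bias, taking conditional expectation and substituting the two formulas from (\ref{expp0}) gives the leading bias as a weighted combination of $\tr(\mathbf H^2\bm{\mathcal H}_{m_j})$ and $\bm{\nabla}^T m_j\,\mathbf H^2\,\bm{\nabla} f/f$ terms. Differentiating $m_j=f_j\ell$ once and twice yields the key identities
\[
\cos(m)\bm{\nabla} m_1-\sin(m)\bm{\nabla} m_2=\ell\,\bm{\nabla} m,
\]
\[
\cos(m)\bm{\mathcal H}_{m_1}-\sin(m)\bm{\mathcal H}_{m_2}=\ell\,\bm{\mathcal H}_m+\bm{\nabla} m\,\bm{\nabla}^T\ell+\bm{\nabla}\ell\,\bm{\nabla}^T m,
\]
with cross terms collapsing via $\sin^2(m)+\cos^2(m)=1$. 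Dividing by $\ell$ and using the symmetry of $\mathbf H^2$ to reduce the rank-one piece to $2\bm{\nabla}^T m\,\mathbf H^2\,\bm{\nabla}\ell$, then combining the resulting $\bm{\nabla}\ell/\ell$ contribution with the $\bm{\nabla} f/f$ contribution from Proposition \ref{pro1} via the product rule $\bm{\nabla}(\ell f)=\ell\bm{\nabla} f+f\bm{\nabla}\ell$, recovers exactly the asserted bias.

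For the variance, the delta method applied to the same linearisation yields
\[
\Var[\hat m_{\mathbf H}(\mathbf x;0)\mid\mathbf X]=\frac{\cos^2 m}{\ell^2}\Var(\hat m_{1,\mathbf H}\mid\mathbf X)-\frac{2\sin m\cos m}{\ell^2}\mbox{Cov}(\hat m_{1,\mathbf H},\hat m_{2,\mathbf H}\mid\mathbf X)+\frac{\sin^2 m}{\ell^2}\Var(\hat m_{2,\mathbf H}\mid\mathbf X)+o_\mathbb{P}((n|\mathbf H|)^{-1}).
\]
Beyond (\ref{varp0}), I need the companion covariance $\mbox{Cov}(\hat m_{1,\mathbf H},\hat m_{2,\mathbf H}\mid\mathbf X)=R(K)c(\mathbf x)/[n|\mathbf H|f(\mathbf x)]+o_\mathbb{P}((n|\mathbf H|)^{-1})$, obtained from the same NW argument with $c(\mathbf x)=\E[\xi\zeta\mid\mathbf X=\mathbf x]$ in place of the error variances. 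Substituting the expressions for $s_1^2$, $s_2^2$, $c$ from Section \ref{sec:model} with $f_1=\sin(m)$ and $f_2=\cos(m)$, the bracketed coefficient collapses to
\[
f_2^2 s_1^2-2f_1 f_2\,c+f_1^2 s_2^2=(f_1^2+f_2^2)^2\sigma_1^2(\mathbf x)=\sigma_1^2(\mathbf x),
\]
with the $\sigma_2^2$ and $\sigma_{12}$ contributions cancelling identically, yielding the claimed variance. The main obstacle I anticipate is the bookkeeping that makes the remainder $R$ negligible on both scales --- in particular showing $\E[R\mid\mathbf X]=o_\mathbb{P}[\tr(\mathbf H^2)]$, which requires $\tr(\mathbf H^2)\to 0$, $n|\mathbf H|\to\infty$, and uniform control of the second derivatives of $G$ near $(m_1(\mathbf x),m_2(\mathbf x))$ (supplied by $\ell(\mathbf x)>0$); the remainder of the work is the trigonometric algebra above, which depends crucially on the explicit structural formulas for $s_1^2$, $s_2^2$, $c$.
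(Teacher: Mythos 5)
Your proposal is correct and follows essentially the same route as the paper's proof: a Taylor expansion of $\mbox{atan2}$ around $(m_1(\mathbf{x}),m_2(\mathbf{x}))$ with partial derivatives $m_2/\ell^2=\cos(m)/\ell$ and $-m_1/\ell^2=-\sin(m)/\ell$, substitution of Proposition \ref{pro1} plus the companion covariance $R(K)c(\mathbf{x})/[n\abs{\mathbf{H}}f(\mathbf{x})]$, and the structural identities relating $\bm{\nabla}m$, $\bm{\mathcal{H}}_m$, $s_1^2$, $s_2^2$, $c$ to $m_1$, $m_2$, $\ell$ (your direct cancellation $f_2^2s_1^2-2f_1f_2c+f_1^2s_2^2=\sigma_1^2$ is exactly the paper's identity (\ref{eq:relation_DM})). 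The only caveat, which you already flag and which the paper shares, is that the quadratic remainder contributes a term of order $(n\abs{\mathbf{H}})^{-1}$ to the conditional expectation, so declaring it $o_{\mathbb{P}}[\tr(\mathbf{H}^2)]$ implicitly needs $(n\abs{\mathbf{H}})^{-1}=o[\tr(\mathbf{H}^2)]$ beyond (A3).
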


\begin{rem}	Note that both the asymptotic conditional  bias and the asymptotic conditional variance share the form of the corresponding quantities for the NW estimator of a regression function with real-valued response. 
	In the asymptotic bias expression, both the gradient and the Hessian matrix of $m$ refer to a circular regression function. In addition, the asymptotic conditional variance depends on the ratio $\sigma^2_1(\mathbf{x})/\ell^2(\mathbf{x})$, accounting for the variability of the errors in model $(\ref{model})$. 
\end{rem}

From Theorem \ref{teoNW}, it is possible to define the asymptotic (conditional) mean squared error ($\rm AMSE$) of 
$\hat{m}_{\mathbf{H}}(\mathbf{x};0)$, as the sum of the square of the main term of bias and the main term of the variance,
\begin{eqnarray}\label{AMSENW}
&&\mbox{AMSE}[\hat{m}_{\mathbf{H}}(\mathbf{x};0)]\nonumber\\&=&\Bigg\{\dfrac{1}{2}{\mu_2(K)}{\rm tr}[\mathbf{H}^2{\bm{\mathcal{H}}}_{m}(\mathbf{x})]+\dfrac{\mu_2(K)}{\ell(\mathbf{x})f(\mathbf{x})}{\bm{\nabla}} ^Tm(\mathbf{x})\mathbf{H}^2{\bm{\nabla}}  (\ell f)(\mathbf{x})\Bigg\}^2+\dfrac{R(K)\sigma^2_1(\mathbf{x})}{n\abs{\mathbf{H}}\ell^2(\mathbf{x})f(\mathbf{x})}\nonumber\\\nonumber&=&\dfrac{1}{4}\mu^2_2(K){\rm tr}^2\Bigg(\mathbf{H}^2\left\{\dfrac{1}{\ell(\mathbf{x})f(\mathbf{x})}[{\bm{\nabla}} (\ell f)(\mathbf{x}){\bm{\nabla}} ^T m(\mathbf{x})+{\bm{\nabla}} m(\mathbf{x}){\bm{\nabla}} ^T (\ell f)(\mathbf{x})]+{\bm{\mathcal{H}}}_{m}(\mathbf{x})\right\}\Bigg)\\&&+ \dfrac{R(K)\sigma^2_1(\mathbf{x})}{n\abs{\mathbf{H}}\ell^2(\mathbf{x})f(\mathbf{x})}.
\end{eqnarray}

The minimizer of equation (\ref{AMSENW}), with respect to $\mathbf H$, provides an asymptoti\-cally optimal local bandwidth matrix for $\hat{m}_{\mathbf{H}}(\mathbf{x};0)$, which is given by:
\begin{eqnarray}
\mathbf{H}_{ \text{opt}}(\mathbf{x})&=&h^*(\mathbf{x})\left[\tilde{\mathcal{B}}(\mathbf x)\right]^{-1/2}\nonumber\\&=&\left[\dfrac{R(K)\sigma^2_1(\mathbf{x})}{nd\mu^2_2(K)f(\mathbf{x})}\abs{\tilde{\mathcal{B}}(\mathbf{x})}^{1/2}\right]^{1/{d+4}}\cdot\left[\tilde{\mathcal{B}}(\mathbf{x})\right]^{-1/2},
\label{ob1NW}
\end{eqnarray}
where
$$\tilde{\mathcal{B}}(\mathbf{x})= \left\{ \begin{array}{lcc}
\mathcal{B}(\mathbf{x}) &   \text{ if }   & \mathcal{B}(\mathbf{x}) \text{ is positive definite,} \\
-\mathcal{B}(\mathbf{x}) & \text{if} & \mathcal{B}(\mathbf{x}) \text{ is negative definite,} \\
\end{array}\right.$$
with 
$$\mathcal{B}(\mathbf{x})=\dfrac{1}{\ell(\mathbf{x})f(\mathbf{x})}[{\bm{\nabla}} (\ell f)(\mathbf{x}){\bm{\nabla}} ^T m(\mathbf{x})+{\bm{\nabla}} m(\mathbf{x}){\bm{\nabla}} ^T (\ell f)(\mathbf{x})]+{\bm{\mathcal{H}}}_{m}(\mathbf{x}).$$


This optimization result can be proved using Proposition 2.6 included in \cite{liu2001kernel}.
Note that in the expression of $\mathbf{H}_{ \text{opt}}(\mathbf{x})$, the matrix  $\tilde{\mathcal{B}}(\mathbf{x})$  determines the shape and the orientation in the $d$-dimensional space of the covariate region which is used to locally compute the estimator. Such data regions for computing the estimator are ellipsoids in $\mathbb R^d$, being the magnitude of the axes controlled by $\tilde{\mathcal{B}}(\mathbf{x})$ . In the particular case of $\bm H=h\mathbf{I}_d$,  the estimator $\hat m_{\mathbf{H}}(\mathbf{x};0)$, with $\mathbf{x}$ being an interior point of the support, achieves an optimal convergence rate of $n^{-4/(d+4)}$, which is the same as the one for the multivariate NW estimator with real-valued response.

Despite deriving the previous explicit expression for the local optimal bandwidth (\ref{ob1NW}), its use in practice is limited given that it depends on unknown functions, such as the design density $f$ and the variance of the sine of the errors $\sigma_1^2$. In addition, when the goal is to reconstruct the whole regression function and the focus is not only set on a specific point, it is more usual in practice to consider a global bandwidth for estimation rather than pursuing an estimator based on local bandwidths. An asymptotic global optimal bandwidth matrix $\mathbf{H}$ could be obtained by minimizing a global error measurement (such as the integrated version of the AMSE). Again, this will depend on unknowns and, moreover, this optimization problem is not trivial, not being possible to obtain a closed form solution. Alternatively, a cross-validation criterion suitably adapted for this context can be used to select the bandwidth matrix. This is indeed the bandwidth selection method employed in our numerical analysis and our real data application. More details will be provided in Section \ref{sec:sim}.

\subsection{Local linear type estimator}
\label{sec:LL}
Similarly to the case when $p=0$, the local linear case, corresponding to $p=1$, is considered. Specifically, for models ($\ref{model1}$) and ($\ref{model2}$),  the  LL estimators of the regression functions $m_j$, $j=1,2$, at $\mathbf{x}\in D$, are defined by:
\begin{equation}\label{estLL}
\hat m_{j, \mathbf{H}}(\mathbf{x};1)=\left\{\begin{array}{lc}\mathbf{e}_1^T(\bm{\mathcal{X}}_{\mathbf{x}}^T\bm{\mathcal{W}}_{\mathbf{x}}\mathcal{X})^{-1}\bm{\mathcal{X}}_{\mathbf{x}}^T\bm{\mathcal{W}}_{\mathbf{x}}\mathbf{\bm{\mathcal{S}}}&\text{if $j=1$},\\\\ \mathbf{e}_1^T(\bm{\mathcal{X}}_{\mathbf{x}}^T\bm{\mathcal{W}}_{\mathbf{x}}\bm{\mathcal{X}}_{\mathbf{x}})^{-1}\bm{\mathcal{X}}_{\mathbf{x}}^T\bm{\mathcal{W}}_{\mathbf{x}}\mathbf{\bm{\mathcal{C}}}&\text{if $j=2$},\end{array}\right.
\end{equation}
where $\mathbf{e}_1$ is a $(d + 1) \times 1$ vector having 1 in the first entry and 0 in all other entries, $\bm{\mathcal{X}}_{\mathbf{x}}$ is a $n\times(d+1)$ matrix having $(1, (\mathbf{X}_i-\mathbf{x})^T)$ as its  $i$th row, $\bm{\mathcal{W}}_{\mathbf{x}}=\mbox{diag}\{K_\mathbf{H}(\mathbf{X}_1-\mathbf{x}),\dots,K_\mathbf{H}(\mathbf{X}_n-\mathbf{x})\}$, $\mathbf{\bm{\mathcal{S}}}=(\sin(\Theta_1),\dots,\sin(\Theta_n))^T$ and $\mathbf{\bm{\mathcal{C}}}=(\cos(\Theta_1),\dots,\cos(\Theta_n))^T$. 

Using known asymptotic results for the multivariate local linear estimator \citep{ruppert1994multivariate}, the asymptotic conditional bias and variance of $\hat m_{j, \mathbf{H}}(\mathbf{x};1)$, $j=1,2$, can be obtained. These expressions are provided in the following result.

\begin{prop}
	\label{pro2}
	Given the random sample $\{(\mathbf{X}_i,\Theta_i)\}_{i=1}^n$  from a density su\-pported on $D\times \mathbb{T}$,  assume models $(\ref{model1})$ and $(\ref{model2})$.  Under assumptions $({\rm A}1)$--$({\rm A}3)$, asymptotic conditional bias of estimators $\hat m_{j, \mathbf{H}}(\mathbf{x};1)$, $j=1,2$, with $\mathbf x$ being a point in the interior of the support of $f$, is: 
	\begin{eqnarray}\label{expp1}{\rm E}[\hat m_{j, \mathbf{H}}(\mathbf{x};1)-m_j(\mathbf{x})\mid \mathbf{X}_1,\ldots,\mathbf{X}_n]&=&\frac{1}{2}\mu_2(K) \tr(\mathbf{H}^2 \bm{\mathcal{H}}_{m_j}(\mathbf{x}))\nonumber\\&&+  \mathpzc{o}_{\mathbb{P}}[\tr(\mathbf{H}^2)],\end{eqnarray}
	and the asymptotic conditional variance is:
	\begin{equation}\label{varp1} {\rm Var}[\hat m_{j, \mathbf{H}}(\mathbf{x};1)\mid \mathbf{X}_1,\ldots,\mathbf{X}_n]=\frac{R(K)s_j^2(\mathbf{x})}{n \abs{\mathbf{H}}f(\mathbf{x})}+\mathpzc{o}_{\mathbb{P}}\left(\frac{1}{n \abs{\mathbf{H}}}\right).
	\end{equation}
\end{prop}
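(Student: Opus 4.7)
The plan is to observe that (\ref{model1}) and (\ref{model2}) are ordinary nonparametric regression problems with bounded real-valued responses $\sin(\Theta_i)$ and $\cos(\Theta_i)$, so that the estimators $\hat m_{j,\mathbf{H}}(\mathbf{x};1)$ defined in (\ref{estLL}) coincide with the standard multivariate local linear estimators studied in \cite{ruppert1994multivariate}. The entire statement then reduces to a direct translation of the asymptotic expansions derived in that reference, once its hypotheses are verified under the current assumptions.

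First I would check the regularity conditions required by the multivariate local linear theory. Assumption (A1) supplies the continuous differentiability of $f$ with $f(\mathbf{x})>0$, the continuity of all second order derivatives of $m_j$ at $\mathbf{x}$, and the continuity and positivity of the conditional variances $s_j^2$ (whose explicit form in terms of $f_1,f_2,\sigma_1^2,\sigma_2^2,\sigma_{12}$ was recorded in Section \ref{sec:model}). Since $|\sin(\Theta_i)|,|\cos(\Theta_i)|\le 1$, the moment conditions on the errors $\xi_i,\zeta_i$ in (\ref{model1})--(\ref{model2}) are trivially satisfied. Assumption (A2) provides the usual kernel conditions, in particular $\int \mathbf{u}\mathbf{u}^T K(\mathbf{u})\,d\mathbf{u}=\mu_2(K)\mathbf{I}_d$ and $R(K)<\infty$, while (A3) yields $\mathbf{H}\to 0$ and $n\abs{\mathbf{H}}\to\infty$.

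The main computation then proceeds along standard lines. Writing $\hat m_{j,\mathbf{H}}(\mathbf{x};1)=\mathbf{e}_1^T(\bm{\mathcal{X}}_{\mathbf{x}}^T\bm{\mathcal{W}}_{\mathbf{x}}\bm{\mathcal{X}}_{\mathbf{x}})^{-1}\bm{\mathcal{X}}_{\mathbf{x}}^T\bm{\mathcal{W}}_{\mathbf{x}}\mathbf{Y}_j$, with $\mathbf{Y}_j$ standing for $\bm{\mathcal{S}}$ or $\bm{\mathcal{C}}$, I would Taylor expand $m_j(\mathbf{X}_i)$ around $\mathbf{x}$ to second order. The design-adaptive property of the local linear fit reproduces the affine part exactly, so the linear Taylor term cancels after multiplication by $\mathbf{e}_1^T(\bm{\mathcal{X}}_{\mathbf{x}}^T\bm{\mathcal{W}}_{\mathbf{x}}\bm{\mathcal{X}}_{\mathbf{x}})^{-1}\bm{\mathcal{X}}_{\mathbf{x}}^T\bm{\mathcal{W}}_{\mathbf{x}}$; only the quadratic remainder survives, and combined with the in-probability convergence of $n^{-1}\bm{\mathcal{X}}_{\mathbf{x}}^T\bm{\mathcal{W}}_{\mathbf{x}}\bm{\mathcal{X}}_{\mathbf{x}}$ to a block matrix built from $f(\mathbf{x})$, $\mu_2(K)$ and $\mathbf{H}^2$, it produces the bias term in (\ref{expp1}). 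Note the absence, compared with (\ref{expp0}), of the gradient correction $\mu_2(K)\bm{\nabla}^T m_j(\mathbf{x})\mathbf{H}^2\bm{\nabla}f(\mathbf{x})/f(\mathbf{x})$, which is precisely the well-known advantage of LL over NW. The variance expression (\ref{varp1}) then follows from the usual sandwich formula applied to the conditional covariance $\diag(s_j^2(\mathbf{X}_i))$ of the responses, combined with the same approximation of $n^{-1}\bm{\mathcal{X}}_{\mathbf{x}}^T\bm{\mathcal{W}}_{\mathbf{x}}\bm{\mathcal{X}}_{\mathbf{x}}$.

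The only point that demands care, rather than a routine citation, is controlling the remainders at the declared rates $\mathpzc{o}_{\mathbb{P}}[\tr(\mathbf{H}^2)]$ and $\mathpzc{o}_{\mathbb{P}}\bigl(1/(n\abs{\mathbf{H}})\bigr)$ under the pointwise---rather than uniform---smoothness assumed in (A1). This is handled using the compact support of $K$ together with $\lambda_{\max}(\mathbf{H})\to 0$: only observations in a shrinking neighborhood of $\mathbf{x}$ receive nonzero weight, on which the quadratic Taylor remainder of $m_j$ is uniformly $o(\norm{\mathbf{X}_i-\mathbf{x}}^2)$ and the continuity of $f$ and $s_j^2$ allows the weighted averages to be approximated by their pointwise limits.
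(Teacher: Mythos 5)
Your proposal is correct and matches the paper's approach: the paper gives no separate proof of Proposition \ref{pro2}, simply invoking the standard multivariate local linear theory of \cite{ruppert1994multivariate} applied to the real-valued regressions (\ref{model1}) and (\ref{model2}), which is exactly your reduction. Your sketch of the design-adaptivity argument and the remainder control under pointwise smoothness is a sound filling-in of the details the paper leaves to the citation.
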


The resulting estimator $\hat m_{\mathbf H}(\mathbf x;1)$ of $m(\mathbf{x})$ given in  (\ref{est}) corresponds to the multivariate version of the local linear estimator proposed in \cite{di2013non}. The follo\-wing theorem provides the asymptotic conditional bias and the asymptotic conditional variance of this estimator.  Its proof is included in the final Appendix.

\begin{theorem}\label{teoLL} 
	Given the random sample $\{(\mathbf{X}_i,\Theta_i)\}_{i=1}^n$  from a density su\-pported on $D\times \mathbb{T}$,  assume model $(\ref{model})$. Then, under assumptions $({\rm A}1)$--$({\rm A}3)$, the asympto\-tic conditional bias of estimator  $\hat{m}_{\mathbf{H}}(\mathbf{x};1)$, with $\mathbf{x}$ being a fixed interior point in the support of $f$,  is given by:
	$$
	{\rm E}[\hat{m}_{\mathbf{H}}(\mathbf{x};1)-m(\mathbf{x})\mid\mathbf{X}_1,\dots,\mathbf{X}_n]=\dfrac{1}{2}\mu_2(K){\rm tr}[\mathbf{H}^2{\bm{\mathcal{H}}}_{m}(\mathbf{x})]+\dfrac{\mu_2(K)}{\ell(\mathbf{x})}{\bm{\nabla}}^T m(\mathbf{x})\mathbf{H}^2{\bm{\nabla}} \ell(\mathbf{x})+\mathpzc{o}_{\mathbb{P}}[\tr(\mathbf{H}^2)],
	$$
	while its  asymptotic conditional variance is:
	$$
	{\rm Var}[\hat{m}_{\mathbf{H}}(\mathbf{x};1)\mid\mathbf{X}_1,\dots,\mathbf{X}_n]=\dfrac{R(K)\sigma^2_1(\mathbf{x})}{n\abs{\mathbf{H}}\ell^2(\mathbf{x})f(\mathbf{x})}+\mathpzc{o}_{\mathbb{P}}\left(\dfrac{1}{n\abs{\mathbf{H}}}\right).
	$$
\end{theorem}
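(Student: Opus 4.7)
The plan is to mirror the proof of Theorem \ref{teoNW}, invoking a first-order Taylor (delta-method) expansion of the atan2 map at the point $(m_1(\mathbf{x}),m_2(\mathbf{x}))$, but now plugging in the LL marginals from Proposition \ref{pro2} instead of the NW ones. Since the partial derivatives of $\mbox{atan2}(y,x)$ are $x/(x^2+y^2)$ and $-y/(x^2+y^2)$, and $m_1^2(\mathbf{x})+m_2^2(\mathbf{x})=\ell^2(\mathbf{x})$, I will write
\[
\hat m_{\mathbf{H}}(\mathbf{x};1)-m(\mathbf{x})=\dfrac{m_2(\mathbf{x})[\hat m_{1,\mathbf{H}}(\mathbf{x};1)-m_1(\mathbf{x})]-m_1(\mathbf{x})[\hat m_{2,\mathbf{H}}(\mathbf{x};1)-m_2(\mathbf{x})]}{\ell^2(\mathbf{x})}+R_n,
\]
where Proposition \ref{pro2} and (A1) guarantee that $\hat m_{j,\mathbf{H}}(\mathbf{x};1)-m_j(\mathbf{x})=\mathcal{O}_{\mathbb P}(\tr(\mathbf{H}^2)+(n\abs{\mathbf{H}})^{-1/2})$, so $R_n$ is negligible at the orders we report (this uses $\ell(\mathbf{x})>0$ to bound the Hessian of atan2 uniformly on a neighborhood).

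For the conditional bias, taking conditional expectation and inserting the LL bias from Proposition \ref{pro2} yields the dominant term
\[
\dfrac{\mu_2(K)}{2\ell^2(\mathbf{x})}\tr\bigl(\mathbf{H}^2[m_2(\mathbf{x})\bm{\mathcal{H}}_{m_1}(\mathbf{x})-m_1(\mathbf{x})\bm{\mathcal{H}}_{m_2}(\mathbf{x})]\bigr).
\]
I will then simplify $m_2\bm{\mathcal{H}}_{m_1}-m_1\bm{\mathcal{H}}_{m_2}$ by writing $m_j=f_j\ell$ with $f_1=\sin m$, $f_2=\cos m$, applying the product rule, and using $\bm{\nabla}f_1=f_2\bm{\nabla}m$, $\bm{\nabla}f_2=-f_1\bm{\nabla}m$, $\bm{\mathcal{H}}_{f_1}=f_2\bm{\mathcal{H}}_m-f_1(\bm{\nabla}m)(\bm{\nabla}m)^T$, $\bm{\mathcal{H}}_{f_2}=-f_1\bm{\mathcal{H}}_m-f_2(\bm{\nabla}m)(\bm{\nabla}m)^T$. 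The identities $f_1^2+f_2^2=1$, $m_1f_1+m_2f_2=\ell$, and $m_2f_1-m_1f_2=0$ make the $\bm{\mathcal{H}}_\ell$ and $(\bm{\nabla}m)(\bm{\nabla}m)^T$ contributions cancel, collapsing the combination to $\ell^2\bm{\mathcal{H}}_m+\ell[(\bm{\nabla}m)(\bm{\nabla}\ell)^T+(\bm{\nabla}\ell)(\bm{\nabla}m)^T]$. Dividing by $\ell^2$ and taking trace against the symmetric $\mathbf{H}^2$ (both rank-one pieces contribute $(\bm{\nabla}m)^T\mathbf{H}^2\bm{\nabla}\ell$) gives exactly the announced bias, with the $\mathpzc{o}_{\mathbb{P}}[\tr(\mathbf H^2)]$ remainder inherited from Proposition \ref{pro2}.

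For the conditional variance, the same linearization gives
\[
\dfrac{m_2^2(\mathbf{x})\Var(\hat m_{1,\mathbf{H}})+m_1^2(\mathbf{x})\Var(\hat m_{2,\mathbf{H}})-2m_1(\mathbf{x})m_2(\mathbf{x})\mbox{Cov}(\hat m_{1,\mathbf{H}},\hat m_{2,\mathbf{H}})}{\ell^4(\mathbf{x})}.
\]
From Proposition \ref{pro2} and the standard LL covariance analogue (which produces $R(K)c(\mathbf{x})/[n\abs{\mathbf{H}}f(\mathbf{x})]$), I substitute the expressions for $s_1^2,s_2^2,c$ given in Section \ref{sec:model} in terms of $\sigma_1^2,\sigma_2^2,\sigma_{12}$ and $f_1,f_2$. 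A bookkeeping check shows the coefficients of $\sigma_2^2$ and $\sigma_{12}$ vanish, while the coefficient of $\sigma_1^2$ simplifies via $f_2^4+2f_1^2f_2^2+f_1^4=(f_1^2+f_2^2)^2=1$ to $\ell^2(\mathbf{x})$, producing the claimed $R(K)\sigma_1^2(\mathbf{x})/[n\abs{\mathbf{H}}\ell^2(\mathbf{x})f(\mathbf{x})]$.

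The main obstacle, as in Theorem \ref{teoNW}, is not any single step but the algebraic collapse: verifying that all mixed trigonometric cross-terms cancel in both the bias (Hessian product-rule step) and the variance (linear combination of $s_1^2,s_2^2,c$). Once those cancellations are confirmed, the result differs from the NW analogue only by the absence of the $\bm{\nabla}f/f$ design-density term in the LL marginal bias, which is precisely why the $\bm{\nabla}(\ell f)$ factor of Theorem \ref{teoNW} is replaced here by $\bm{\nabla}\ell$.
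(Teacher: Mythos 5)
Your proposal is correct and follows essentially the same route as the paper: linearize $\mbox{atan2}$ at $(m_1(\mathbf{x}),m_2(\mathbf{x}))$, insert the LL bias/variance from Proposition \ref{pro2}, reduce $m_2\bm{\mathcal{H}}_{m_1}-m_1\bm{\mathcal{H}}_{m_2}$ to $\ell^2\bm{\mathcal{H}}_m+\ell[\bm{\nabla}m\bm{\nabla}^T\ell+\bm{\nabla}\ell\bm{\nabla}^Tm]$ (your direct expansion via $m_j=f_j\ell$ is algebraically equivalent to the paper's identities (\ref{grad})--(\ref{hess})), and collapse the variance combination of $s_1^2,s_2^2,c$ to $\sigma_1^2/\ell^2$, which is exactly relation (\ref{eq:relation_DM}). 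The one step you cite as ``standard'' rather than prove --- that $\mbox{Cov}[\hat m_{1,\mathbf{H}}(\mathbf{x};1),\hat m_{2,\mathbf{H}}(\mathbf{x};1)\mid\mathbf{X}_1,\dots,\mathbf{X}_n]=R(K)c(\mathbf{x})/[n\abs{\mathbf{H}}f(\mathbf{x})]+\mathpzc{o}_{\mathbb{P}}(1/(n\abs{\mathbf{H}}))$ --- is precisely the part the paper derives explicitly via the sandwich form $\mathbf{e}_1^T(\bm{\mathcal{X}}_{\mathbf{x}}^T\bm{\mathcal{W}}_{\mathbf{x}}\bm{\mathcal{X}}_{\mathbf{x}})^{-1}\bm{\mathcal{X}}_{\mathbf{x}}^T\bm{\mathcal{W}}_{\mathbf{x}}\Sigma\bm{\mathcal{W}}_{\mathbf{x}}\bm{\mathcal{X}}_{\mathbf{x}}(\bm{\mathcal{X}}_{\mathbf{x}}^T\bm{\mathcal{W}}_{\mathbf{x}}\bm{\mathcal{X}}_{\mathbf{x}})^{-1}\mathbf{e}_1$, but the value you assert is the correct one, so this is a matter of detail rather than a gap.
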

\begin{rem}
	Estimators $\hat m_{\mathbf H}(\mathbf{x};0)$ and $\hat m_{\mathbf H}(\mathbf{x};1)$ have the same leading terms in their asymptotic conditional variances, while their asymptotic conditional biases, also being of the same order,  have different leading terms. In particular, the main term of the  asymptotic conditional bias of $\hat m_{\mathbf{H}}(\mathbf{x};1)$ does not depend  on the design density, $f$. Moreover,  as a consequence of its definition, the LL type estimator, differently from the NW type one, automatically adapts to boundary regions, in the sense that for compactly supported $f$, the asymptotic conditional bias has the same order both for the interior and for the boundary of the support of $f$  \citep{ruppert1994multivariate}.
\end{rem}

\begin{rem} For $d=1$, asymptotic results  for  estimators having the same form as the univariate version of estimator (\ref{est})  with $p=0$ and $p=1$,  are provided in \cite{di2013non}. Despite they used slightly different formulations for their nonparametric estimators, their results, at interior points, can be directly compared with those obtained in Theorems \ref{teoNW} and \ref{teoLL}. This correspondence is immediately clear for the asymptotic bias terms. For the asymptotic variance, the equivalence between the expressions can be obtained considering the relations between the variance of the error term in model (\ref{model}) with the variance of the error terms in models (\ref{model1}) and (\ref{model2}):
\begin{eqnarray}\label{eq:relation_DM}
f_1^2(\mathbf{x})[m_2^2(x)+s_2^2(x)]+f_2^2(\mathbf{x})[m_1^2(x)+s_1^2(x)]-2f_1(\mathbf{x})f_2(\mathbf{x})[m_1(\textbf{x})m_2(\textbf{x})+{c}(\mathbf{x})]=\sigma^2_1(x)\end{eqnarray}
\end{rem}

As a consequence of Theorem \ref{teoLL}, and similarly to the NW case, an asymptotically optimal local bandwidth can be also obtained for $\hat m_{\mathbf H}(\mathbf{x};1)$, which coincides with (\ref{ob1NW}), but taking $ \mathcal{B}(\mathbf{x})=\ell^{-1}(\mathbf{x})[{\bm{\nabla}} \ell (\mathbf{x}){\bm{\nabla}}^T m(\mathbf{x})+{\bm{\nabla}} m(\mathbf{x}){\bm{\nabla}}^T \ell (\mathbf{x})]+{\bm{\mathcal{H}}}_{m}(\mathbf{x})$.

\subsection{Higher order polynomials}
\label{sec:high}
Standard local polynomial theory \citep{fan1996local} can be used to generalize the above results to local polynomial estimators of arbitrary order $p$. Using similar arguments to those used to prove Theorems 1\ref{teoNW} and \ref{teoLL}, it can be derived that the The conditional bias of the $p$th order polynomial type estimator given in (\ref{est}) will be of order $\mathcal{\mathcal{O}}_{\mathbb{P}}\{[{\rm tr}(\mathbf{H}^2)]^{(p+1)/2}\}$. Moreover, if $p$ is even, $f$ has a continuous derivative in a neighborhood of $\mathbf{x}$, and $\mathbf{x}$ is an interior point of the support of the design density $f$, then the bias will be of order $\mathcal{\mathcal{O}}_{\mathbb{P}}\{[{\rm tr}(\mathbf{H}^2)^{p/2+1}]\}$. 
Here, following the lines in \citet{ruppert1994multivariate}, we will only focus on the case $d=1$ to analyze asymptotically the nonparametric regression estimator given in (\ref{est}) for $p>1$. In particular,  the $p$th degree local polynomial estimators for $m_j$, $j=1,2$, at $x\in \mathcal{D}\subseteq \mathbb{R}$, are:
\begin{equation}\label{locpoly}
\hat m_{j, h}(x;p)=\left\{\begin{array}{lc}\mathbf{e}^T_1(\bm{\mathcal{X}}_{{x},p}^T\bm{\mathcal{W}}_{x}\bm{\mathcal{X}}_{{x},p})^{-1}\bm{\mathcal{X}}_{{x},p}^T\bm{\mathcal{W}}_{x}\bm{\mathcal{S}}&\text{if  $j=1$}\\\\\mathbf{e}^T_1(\bm{\mathcal{X}}_{{x},p}^T\bm{\mathcal{W}}_{x}\bm{\mathcal{X}}_{{x},p})^{-1}\bm{\mathcal{X}}_{{x},p}^T\bm{\mathcal{W}}_{x}\bm{\mathcal{C}}&\text{if $j=2$}\end{array}\right.
\end{equation}
where $\mathbf{e}_1$ is a $(p+1) \times 1$ vector having 1 in the first entry and zero elsewhere, $\bm{\mathcal{X}}_{{x},p}$ is for $n\times p$ matrix with the $(i,k)$th entry equal to $(X_i-x)^{k-1}$, and $\bm{\mathcal{W}}_{x}$ is a diagonal matrix of order $n$ with $(i,i)$th entry equal to $K_h(X_i-x)$, where $K_h(u)=1/h K(u/h)$, being $K$ a univariate kernel function, and $h$ the bandwidth or smoothing parameter. In this univariate framework, the $p$th degree local polynomial type estimator of $m$ at $x$, denoted by $\hat m_{h}(x;p)$, has the same expression as the one given in (\ref{est}), but using estimators $\hat m_{j, h}(x;p)$, $j=1,2$, defined in $(\ref{locpoly})$, as the arguments of the atan2 function.

Let $K_{(p)}$ be the equivalent kernel function defined in \cite{lejeune1992smooth}, which is a kernel of order $p+2$ when $p$ is even and of order $p+1$ otherwise. Let  $\mu_j(K_{(p)})$ and $R(K_{(p)})$ denote the moment of order $j$ and the roughness of $K_{(p)}$,  respectively. Under suitable adaptations of assumptions $({\rm A}1)$--$({\rm A}3)$ to the univariate case and using asymptotic results for standard local polynomial estimators of an arbitrary order $p$, the asymptotic conditional bias and variance of $\hat m_{j,h}(x;p)$, $j=1,2$, can be obtained. It is clear that the conditional asymptotic bias of $\hat m_{h}(x;p)$ will depend on whether the polynomial degree is even or odd.  Since computations are tedious for high-order polynomials, asymptotic  properties of estimator 
$\hat m_{h}(x;p)$ at $x\in \mathcal{D}$ will be derived only when the polynomial degree $p$ is equal to two and three.  Notice that in the case of the regression function, \citet{fan1996local} recommend to use  polynomial orders   $p = 1$ or $p=3$ for estimating this curve. Results could be extended for higher-order polynomial degrees.

\begin{theorem}\label{C_teo2} 
	Let $\{(X_i,\Theta_i)\}_{i=1}^{n}$ be a random sample from a density defined on $\mathcal{D}\times\mathbb{T}$, with $\mathcal{D}\subseteq \mathbb{R}$, and let $x$ be an interior point of the support of the design density $f$. Under assumptions \textnormal{(A1)--(A3)} with $d=1$, and assuming that $m_j$, $j=1,2$, admits continuous derivatives up to order four in a neighborhood of $x$, then
	\begin{eqnarray*}{\mathbb{E}}[\hat m_{h}(x;2)-m(x)\mid X_1,\ldots, X_n]&=&\frac{h^{4}\mu_{4}(K_{(2)})f'(x)}{3!f(x)}[m^{(3)}(x)+a(x)]\\&&+ \frac{h^{4}\mu_{4}(K_{(2)})}{4!}[m^{(4)}(x)+b(x)]+{o}_{\mathbb{P}}\left(h^{4}\right),\end{eqnarray*}
	and
	\[{\mathbb{V}{\rm ar}}[\hat m_{h}(x;2)\mid X_1,\ldots, X_n]=\frac{R\left(K_{(2)}\right)}{n h\ell^2(x) f(x)}\sigma_1^2(x)+{o}_{\mathbb{P}}\left(\frac{1}{n h}\right),\]
		where
	$$a(x)=\dfrac{2\ell''(x)m'(x)+4\ell'(x)m''(x)}{\ell(x)}+\dfrac{m_2''(x)m_1'(x)-m_1''(x)m_2'(x)+2\ell'^2(x)m'(x)}{\ell^2(x)}$$
	and
	\begin{eqnarray*}b(x)&=&\dfrac{2\ell^{(3)}(x)m'(x)+6\ell'(x)m^{(3)}(x)+6\ell''(x)m''(x)}{\ell(x)}\\&&+ \dfrac{2m_2^{(3)}(x)m_1'(x)-2m_1^{(3)}(x)m_2'(x)+6\ell'^2(x)m''(x)+6\ell'(x)\ell''(x)m'(x)}{\ell^2(x)}\end{eqnarray*}
\end{theorem}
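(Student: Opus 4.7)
The plan is to follow the delta-method scheme used in the Appendix proofs of Theorems \ref{teoNW} and \ref{teoLL}, now fed by the stronger four-term asymptotic expansion available for even-degree local polynomial fits. First I would invoke the standard Ruppert--Wand / Fan--Gijbels expansions for the degree-two univariate local polynomial estimator applied separately to models (\ref{model1}) and (\ref{model2}). Because $p=2$ is even and $f$ is continuously differentiable, the conditional bias of $\hat m_{j,h}(x;2)$, $j=1,2$, admits the two-term $\mathcal{O}(h^{4})$ expansion
\[
\mathbb{E}[\hat m_{j,h}(x;2)-m_j(x)\mid X_1,\ldots,X_n]=\frac{h^{4}\mu_{4}(K_{(2)})\,f'(x)}{3!\,f(x)}\,m_j^{(3)}(x)+\frac{h^{4}\mu_{4}(K_{(2)})}{4!}\,m_j^{(4)}(x)+o_{\mathbb{P}}(h^{4}),
\]
with variance $R(K_{(2)})s_j^2(x)/(nhf(x))+o_{\mathbb{P}}(1/(nh))$ and conditional covariance $R(K_{(2)})c(x)/(nhf(x))+o_{\mathbb{P}}(1/(nh))$ between $\hat m_{1,h}$ and $\hat m_{2,h}$, the latter following immediately since both estimators share the same local polynomial weights.

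Next I would expand $G(u,v)=\mathrm{atan2}(u,v)$ to second order around $(m_1(x),m_2(x))$, where $G_u=m_2(x)/\ell^2(x)$ and $G_v=-m_1(x)/\ell^2(x)$. The second-order remainder involves $\mathbb{E}[(\hat m_{j,h}-m_j)^2]=\mathcal{O}(1/(nh))+\mathcal{O}(h^{8})$, which is absorbed in the $o_{\mathbb{P}}(h^{4})$ remainder at the relevant bandwidth regime. Taking the conditional expectation of the first-order term and substituting the two-term bias above yields
\[
\mathbb{E}[\hat m_{h}(x;2)-m(x)\mid X_1,\ldots,X_n]=\frac{h^{4}\mu_{4}(K_{(2)})f'(x)}{3!\,f(x)}\cdot\frac{m_2 m_1^{(3)}-m_1 m_2^{(3)}}{\ell^2(x)}+\frac{h^{4}\mu_{4}(K_{(2)})}{4!}\cdot\frac{m_2 m_1^{(4)}-m_1 m_2^{(4)}}{\ell^2(x)}+o_{\mathbb{P}}(h^{4}).
\]

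The heart of the proof is then the identification of these two bracketed ratios with $m^{(3)}(x)+a(x)$ and $m^{(4)}(x)+b(x)$. Starting from $m'=(m_2 m_1'-m_1 m_2')/\ell^2$ and iterating the product and quotient rules together with the constraint $\ell^2=m_1^2+m_2^2$, one first obtains $(m_2 m_1''-m_1 m_2'')/\ell^2=m''+2m'\ell'/\ell$, then $(m_2 m_1^{(3)}-m_1 m_2^{(3)})/\ell^2=m^{(3)}(x)+a(x)$, and finally $(m_2 m_1^{(4)}-m_1 m_2^{(4)})/\ell^2=m^{(4)}(x)+b(x)$. This algebraic reduction is the main obstacle: the identity for the fourth-order ratio requires extensive cancellations driven by $(m_1/\ell)^2+(m_2/\ell)^2=1$ and careful bookkeeping of the terms $\ell^{(3)}m'$, $\ell''m''$, $\ell' m^{(3)}$, $\ell'^{2}m''$, $\ell'\ell''m'$ and of the mixed products $m_1^{(3)}m_2'$, $m_1' m_2^{(3)}$ that make up $b(x)$.

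For the variance, the first-order delta method gives
\[
\mathrm{Var}[\hat m_{h}(x;2)\mid X_1,\ldots,X_n]\approx\frac{m_2^2(x)s_1^2(x)-2m_1(x)m_2(x)c(x)+m_1^2(x)s_2^2(x)}{\ell^4(x)}\cdot\frac{R(K_{(2)})}{nhf(x)}+o_{\mathbb{P}}\!\left(\frac{1}{nh}\right).
\]
Plugging in the Section~\ref{sec:model} identities expressing $s_1^2$, $s_2^2$ and $c$ in terms of $\sigma_1^2,\sigma_2^2,\sigma_{12},f_1,f_2$, the numerator telescopes by a direct expansion to $\ell^2(x)(f_1^2+f_2^2)^2\sigma_1^2(x)=\ell^2(x)\sigma_1^2(x)$, producing the stated variance.
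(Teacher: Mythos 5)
Your proposal is correct and follows essentially the same route as the paper's own proof: component-wise bias, variance and covariance expansions for the local quadratic estimators of $m_1$ and $m_2$, a Taylor (delta-method) expansion of $\mathrm{atan2}$ around $(m_1(x),m_2(x))$, the iterated derivative identities reducing $(m_2 m_1^{(k)}-m_1 m_2^{(k)})/\ell^2$ to $m^{(k)}+{}$correction for $k=3,4$, and the telescoping of $m_2^2 s_1^2-2m_1m_2c+m_1^2s_2^2$ to $\ell^2\sigma_1^2$ via the Section~\ref{sec:model} moment relations. No substantive differences from the argument given in the Appendix.
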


\begin{theorem}\label{C_teo3} 
	Let $\{(X_i,\Theta_i)\}_{i=1}^{n}$ be a random sample from a density defined on $\mathcal{D}\times\mathbb{T}$, with $\mathcal{D}\subseteq \mathbb{R}$, and let $x$ be an interior point of the support of the design density $f$. Under assumptions \textnormal{(A1)--(A3)} with $d=1$, and assuming that $m_j$, $j=1,2$, admits continuous derivatives up to order five in a neighborhood of $x$, then
	\begin{eqnarray*}{\mathbb{E}}[\hat m_{h}(x;3)-m(x)\mid X_1,\ldots, X_n]&=&\frac{h^{4}\mu_{4}(K_{(3)})}{4!}[m^{(4)}(x)+b(x)]+{o}_{\mathbb{P}}\left(h^{4}\right),\end{eqnarray*}
	and
	\[{\mathbb{V}{\rm ar}}[\hat m_{h}(x;3)\mid X_1,\ldots, X_n]=\frac{R\left(K_{(3)}\right)}{n h\ell^2(x) f(x)}\sigma_1^2(x)+{o}_{\mathbb{P}}\left(\frac{1}{n h}\right).\]
\end{theorem}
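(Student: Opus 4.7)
The plan is to follow the two-stage strategy already used in the proofs of Theorems \ref{teoNW}, \ref{teoLL}, and \ref{C_teo2}: first obtain the asymptotic conditional bias and variance of the univariate local polynomial estimators $\hat m_{j,h}(x;3)$ of the sine and cosine components ($j=1,2$), and then propagate these through the $\mbox{atan2}$ transformation. The key simplification relative to the even-degree case of Theorem \ref{C_teo2} is that for $p=3$ the equivalent kernel $K_{(3)}$ is of order $p+1=4$, so the $f'(x)/f(x)$ contribution that produced the $a(x)$ term in the even case vanishes by symmetry, which is why no analogue of $\frac{h^{4}\mu_{4}(K_{(2)})f'(x)}{3!f(x)}[m^{(3)}(x)+a(x)]$ appears in the present statement.

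For the first stage, standard univariate local polynomial theory \citep{fan1996local} yields, at an interior point $x$ under the assumed smoothness,
\[
{\mathbb E}[\hat m_{j,h}(x;3)-m_j(x)\mid X_1,\ldots,X_n]=\frac{h^{4}\mu_{4}(K_{(3)})}{4!}m_j^{(4)}(x)+\mathpzc{o}_{\mathbb{P}}(h^{4}),
\]
with conditional variance $R(K_{(3)})s_j^2(x)/[nhf(x)]+\mathpzc{o}_{\mathbb{P}}(1/(nh))$ and analogous conditional covariance between $\hat m_{1,h}(x;3)$ and $\hat m_{2,h}(x;3)$ after replacing $s_j^2(x)$ with $c(x)$. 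A second-order Taylor expansion of $g(u,v)=\mbox{atan2}(u,v)$ about $(m_1(x),m_2(x))$ then gives
\[
\hat m_h(x;3)-m(x)=\frac{f_2(x)}{\ell(x)}\Delta_1-\frac{f_1(x)}{\ell(x)}\Delta_2+\tfrac{1}{2}\!\left[g_{uu}\Delta_1^{2}+2g_{uv}\Delta_1\Delta_2+g_{vv}\Delta_2^{2}\right]+\text{higher order},
\]
with $\Delta_j=\hat m_{j,h}(x;3)-m_j(x)$. Taking conditional expectation, the linear part yields $\frac{h^{4}\mu_4(K_{(3)})}{4!\,\ell(x)}[f_2(x)m_1^{(4)}(x)-f_1(x)m_2^{(4)}(x)]$, while the quadratic terms are $\mathcal{O}_{\mathbb{P}}(1/(nh))$ and are absorbed into the $\mathpzc{o}_{\mathbb{P}}(h^4)$ remainder under the usual bandwidth rate. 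The conditional variance follows by the first-order delta method as $R(K_{(3)})[f_2^2 s_1^2+f_1^2 s_2^2-2f_1 f_2 c]/[nh\,\ell^2(x)f(x)]$ to leading order.

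What remains is purely algebraic identification with the stated forms. For the variance, inserting the expressions for $s_1^2$, $s_2^2$, and $c$ from Section \ref{sec:model} and using $f_1^2+f_2^2=1$ collapses the bracket to $\sigma_1^2(x)(f_1^2+f_2^2)^2=\sigma_1^2(x)$, since the $\sigma_2^2$ and $\sigma_{12}$ contributions cancel exactly. For the bias, I would expand $m_j^{(4)}=(f_j\ell)^{(4)}$ by Leibniz and introduce the Wronskian-like quantities $W_k:=f_2 f_1^{(k)}-f_1 f_2^{(k)}$; successive differentiation via $f_1'=f_2 m'$ and $f_2'=-f_1 m'$ gives $W_1=m'$, $W_2=m''$, $W_3=m^{(3)}-(m')^3$, and $W_4=m^{(4)}-6(m')^2 m''$, so that $\ell^{-1}[f_2 m_1^{(4)}-f_1 m_2^{(4)}]=W_4+\ell^{-1}(4\ell' W_3+6\ell'' W_2+4\ell^{(3)} W_1)$. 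A parallel expansion of the auxiliary quantity $M_{31}:=m_2^{(3)}m_1'-m_1^{(3)}m_2'$ entering $b(x)$ then shows that this expression coincides with $m^{(4)}(x)+b(x)$, with $b(x)$ exactly as defined in Theorem \ref{C_teo2}. The main obstacle is careful sign-tracking when differentiating such Wronskians---for instance, $f_2'f_1''-f_1'f_2''$ and $f_2''f_1'-f_1''f_2'$ differ by a sign---but no genuinely new analytic ideas beyond those already used for Theorem \ref{C_teo2} are required.
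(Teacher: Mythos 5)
Your proposal is correct and takes essentially the same route as the paper: it feeds the standard bias, variance and covariance asymptotics of the univariate local cubic estimators of $m_1$ and $m_2$ through a Taylor expansion of $\mbox{atan2}$ around $(m_1(x),m_2(x))$, identifies the linear-term bias $\ell^{-1}(x)[f_2(x)m_1^{(4)}(x)-f_1(x)m_2^{(4)}(x)]$ with $m^{(4)}(x)+b(x)$, and collapses $f_2^2(x)s_1^2(x)+f_1^2(x)s_2^2(x)-2f_1(x)f_2(x)c(x)$ to $\sigma_1^2(x)$, exactly as in the proofs of Theorems \ref{C_teo2} and \ref{teoLL}. The only (cosmetic) difference is that you organize the final algebra via the Leibniz expansion of $(f_j\ell)^{(4)}$ and the Wronskians $W_k=f_2f_1^{(k)}-f_1f_2^{(k)}$ (your values $W_3=m^{(3)}-(m')^3$ and $W_4=m^{(4)}-6(m')^2m''$ check out and do yield $m^{(4)}+b$), whereas the paper reuses the identity (\ref{4der_uni}) for $m^{(4)}$ obtained by repeatedly differentiating $m'=\ell^{-2}(m_1'm_2-m_2'm_1)$ in the proof of Theorem \ref{C_teo2}.
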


\section{Simulation study}
\label{sec:sim}
In order to illustrate the performance of the estimators proposed in Section \ref{sec:est},  a simulation study considering different scenarios is carried out for $d=2$ (that is, considering a circular response and a bidimensional covariate). For each scenario, 500 samples of size $n$ ($n=64, 100, 225$ and $400$) are generated on a bidimensional regular grid in the 
unit square considering the following regression models:
\begin{enumerate}[{M}1.]
	\item$\Theta=[\mbox{atan2}(6{X}_{1}^5-2{X}_{1}^3-1,-2{X}_{2}^5-3{X}_{2}-1)+\varepsilon](\mbox{\texttt{mod}} \, 2\pi)$,
	\item $\Theta=[\mbox{acos}({X}_{1}^5-1)+\dfrac{3}{2}\mbox{asin}({X}_{2}^3-{X}_{2}+1)+\varepsilon](\mbox{\texttt{mod}}\, 2\pi)$,
\end{enumerate}
where $\mathbf{X}=(X_{1},X_{2})$ denotes the bidimensional covariate, and the circular errors, $\varepsilon$, are drawn from a von Mises distribution $vM(0,\kappa)$ with different values of $\kappa$ (5, 10 and 15).

Figure \ref{figure:sim} shows two realizations of simulated data (model M1: top row; model M2: bottom row). In both cases, the sample size is $n=225$. Left plots show the regression functions evaluated in the regularly spaced sample $(X_1,X_2)$. Central panels present the random errors generated from a von Mises distribution with zero mean direction and concentration $\kappa=5$, for model M1, and $\kappa=15$, for model M2. Right panels show the values of the response variables, obtained \emph{adding}  regression functions and  circular errors.  It can be seen that the errors in the top row, corresponding to $\kappa=5$,  present more variability than the ones generated with $\kappa=15$.

\begin{figure}[t]
	\centering
	\includegraphics[width=0.32\textwidth]{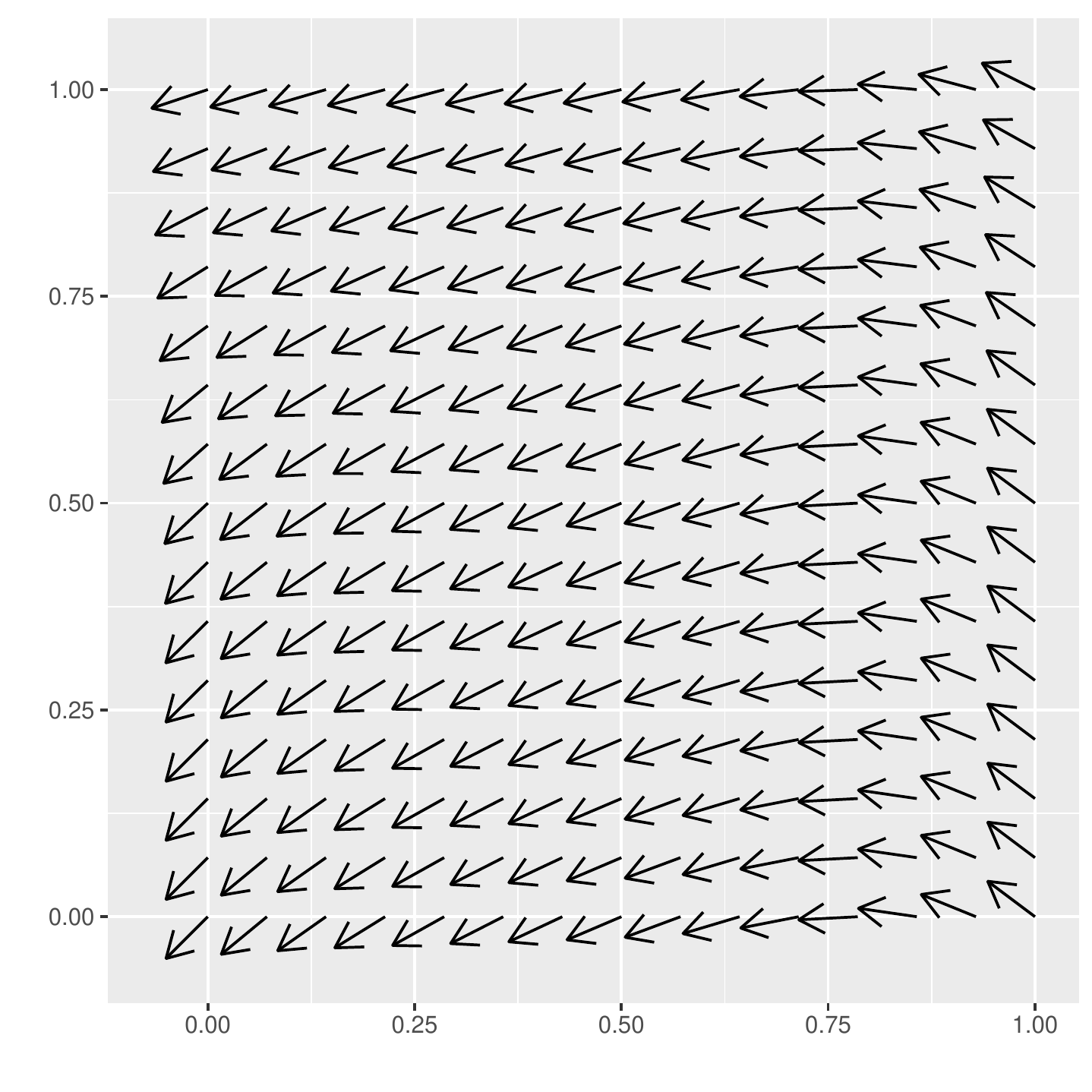}
	\includegraphics[width=0.32\textwidth]{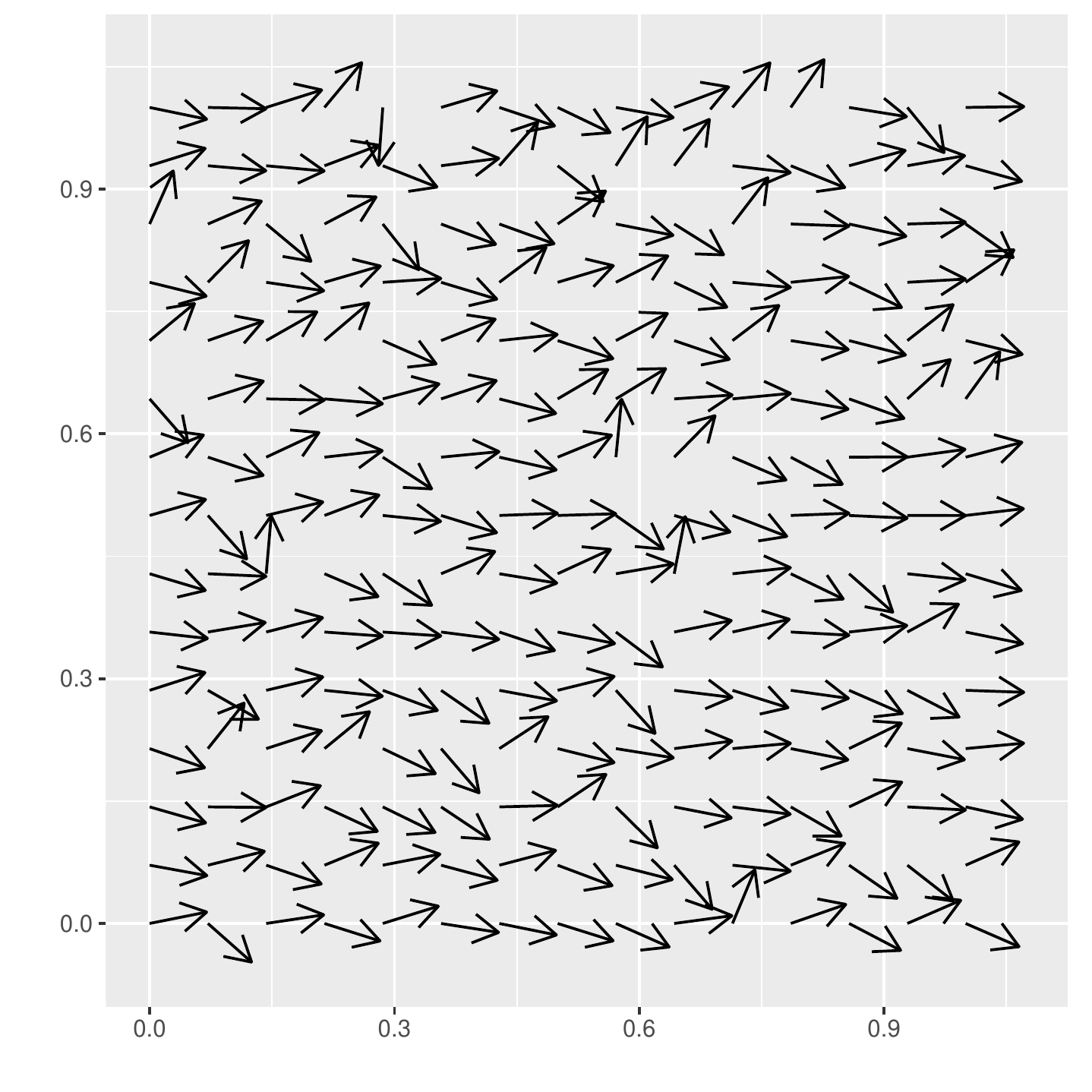}
	\includegraphics[width=0.32\textwidth]{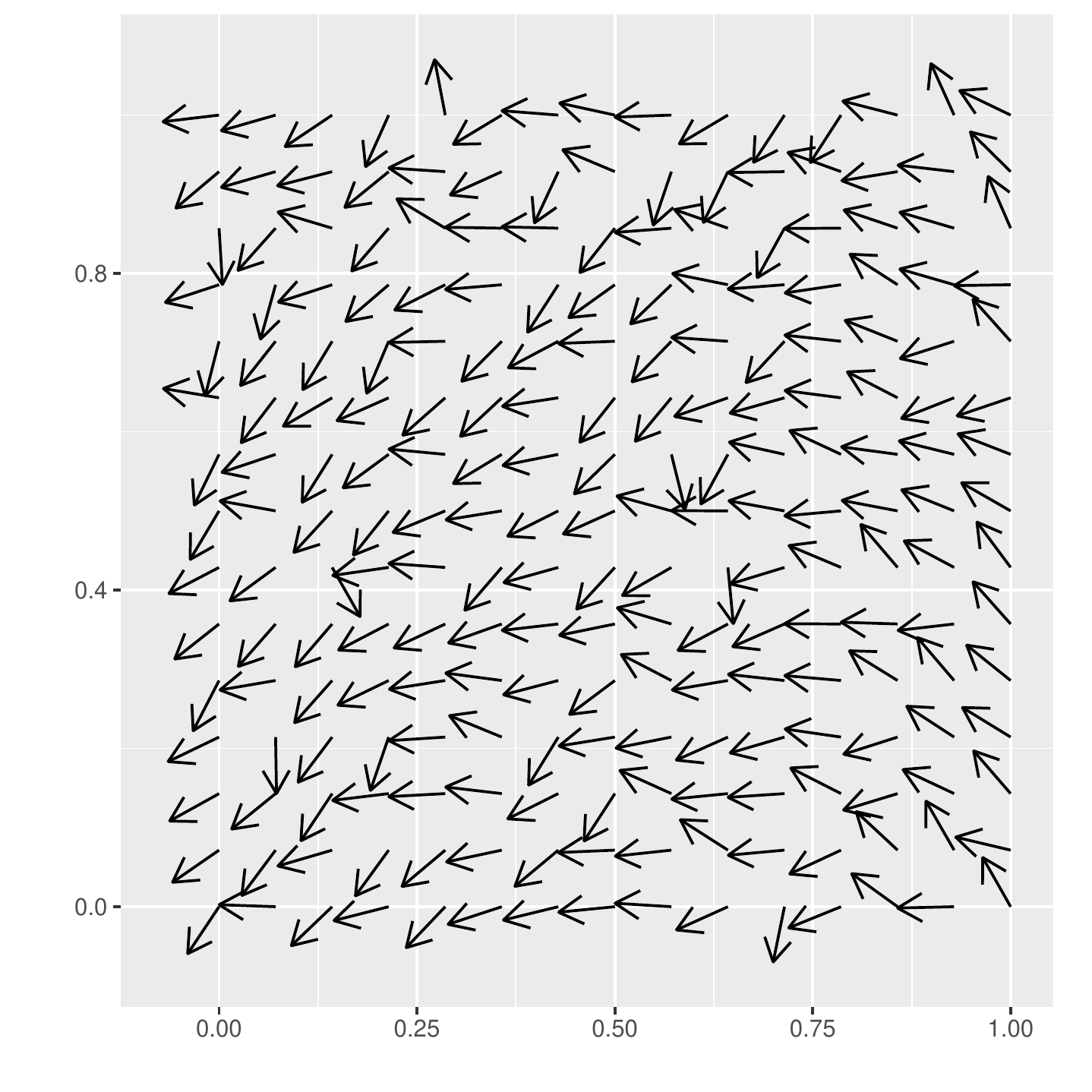}
	\includegraphics[width=0.32\textwidth]{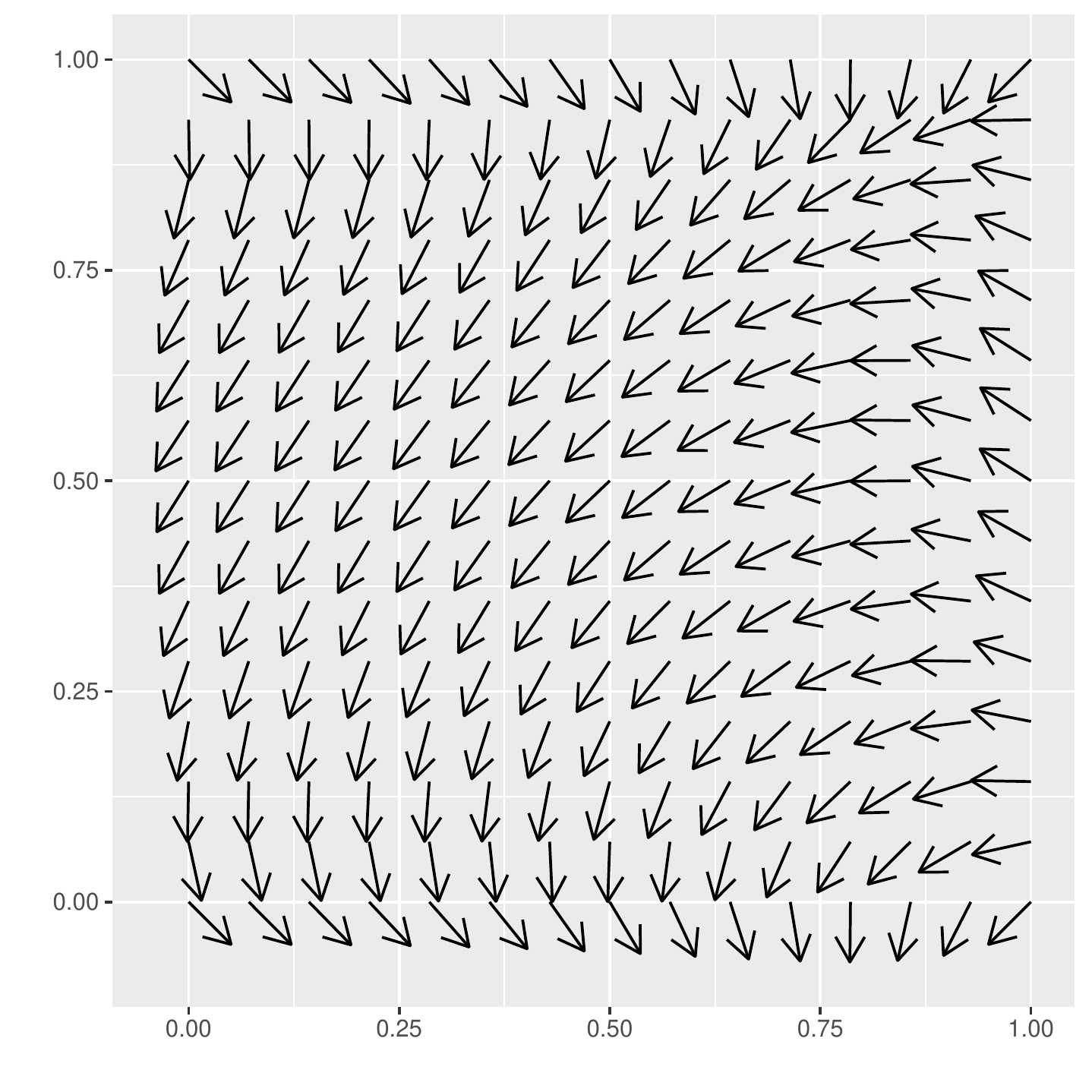}
	\includegraphics[width=0.32\textwidth]{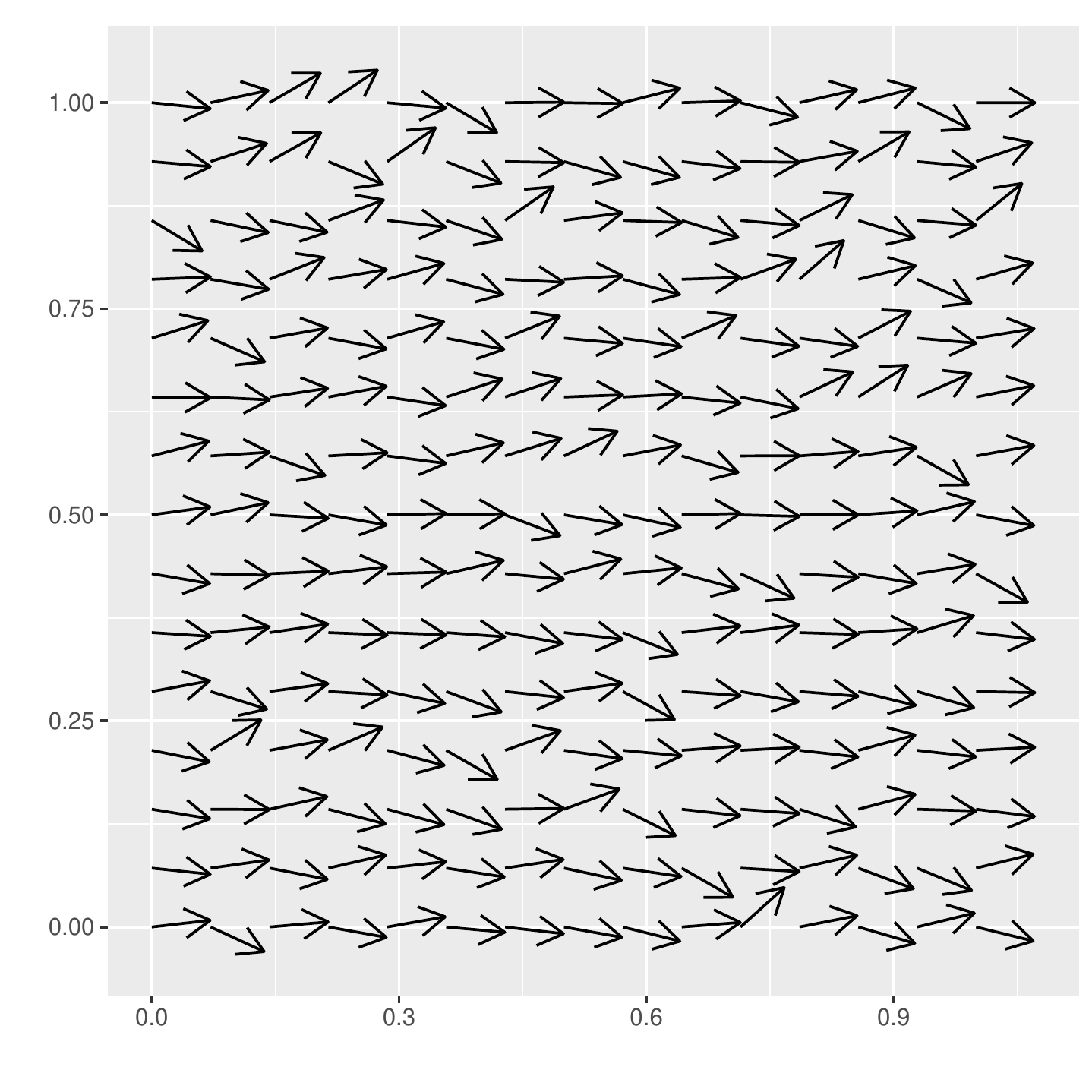}
	\includegraphics[width=0.32\textwidth]{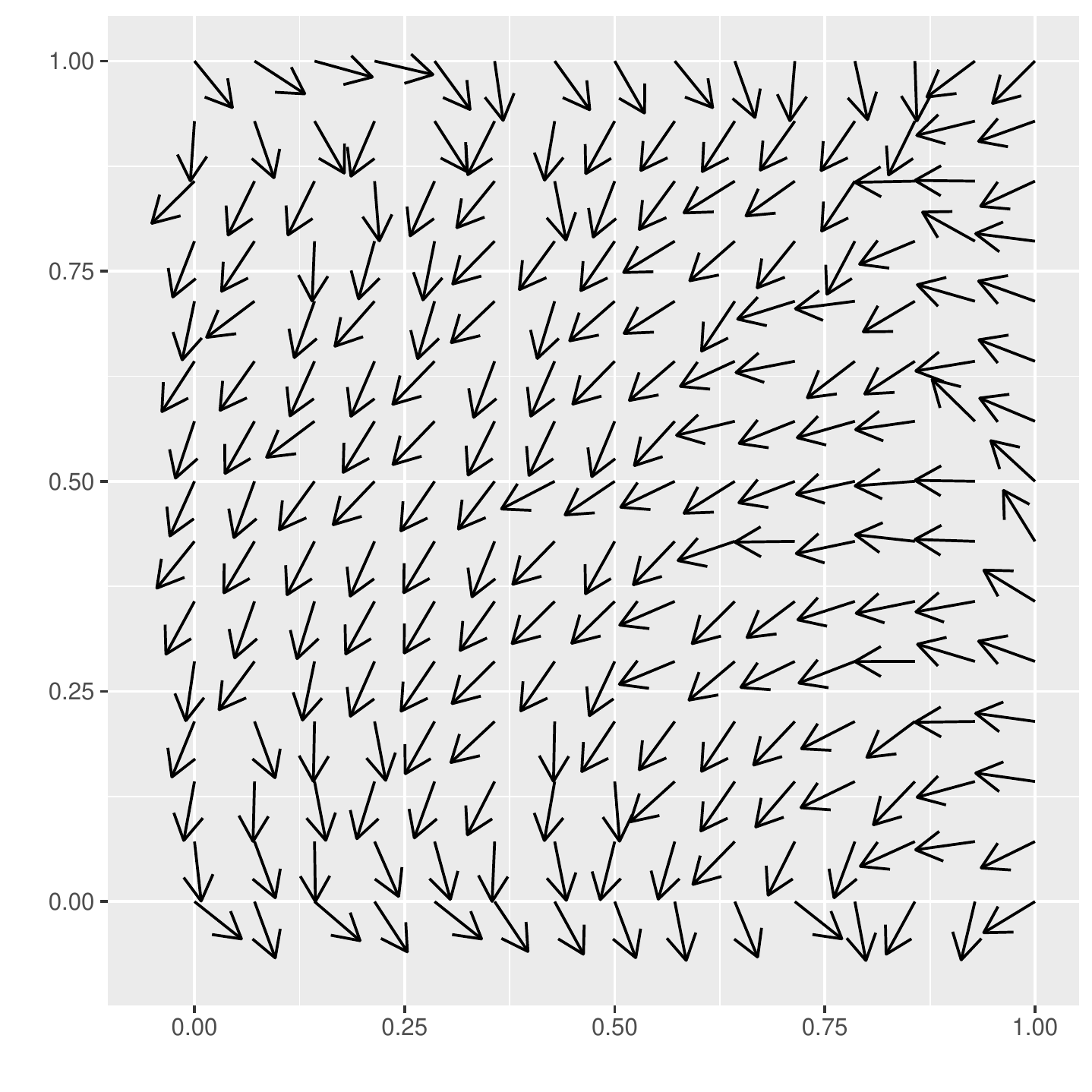}
	\caption{Illustration of model generation (model M1: top row; model M2: bottom row) on a $15\times 15$ grid. In left panels, regression functions evaluated at the grid points. In center panels, independent errors from a von Mises distribution with zero mean and concentration $\kappa=5$, for model M1, and $\kappa=15$, for model M2. In right panels, random response variables  obtained by adding the two previous plots.}
	\label{figure:sim}
\end{figure}

Numerical and graphical outputs summarize the finite sample performance of NW  and LL type  estimators in the different scenarios. In all cases, the smoothing parameter is chosen by cross-validation, selecting the bandwidth matrix $\mathbf{H}$ that minimizes the function:
$${\rm CV}(\mathbf{H})=\sum_{i=1}^n \left\{  1-\cos\left[\Theta_i-\hat{m}_{\mathbf{H}}^{(i)}(\mathbf{X}_i;p)\right]\right\},$$ 
where $\hat{m}_{\mathbf{H}}^{(i)}(\cdot;p)$ stands for the NW type estimator ($p=0$) or the LL type estimator ($p=1$), computed using all observations except $(\mathbf{X}_i,\Theta_i)$.
Taking into account the type of regression functions considered in models M1 and M2 and to speed up the computing times, in this simulation study, the
bandwidth matrix is restricted to be diagonal with possibly different elements. A multivariate Epanechnikov kernel is considered for simulations.

Table \ref{table:simus} shows the average, over the 500 replicates, of the circular average squared error (CASE), defined as:
\begin{eqnarray}\label{CASE}
{\rm CASE}[\hat{m}_{\mathbf{H}}(\mathbf{x};p)]=\frac{1}{n}\sum_{i=1}^n \left\{1- \cos\left[m(\mathbf{X}_i) - \hat{m}_{\mathbf{H}}(\mathbf{X}_i;p)\right]\right\},\end{eqnarray}
for models M1 and M2, and $p=0$ (NW) and $p=1$ (LL). It can be seen that this average error decreases with the sample size, and it is smaller for the LL type estimator in all the considered scenarios.

\begin{table}[h!]
	\centering
	\begin{tabular}{ccccccccccccc}
		$\kappa$&$n$&	&	 &\text{M1}&&&&& \text{M2}& \\
		\cline{4-6}\cline{9-11}
		&&&NW&&LL&&&NW& &LL \\
		\hline
		5&64&&0.0225&& 0.0233 &&&0.0366 &&0.0282\\
		&100&& 0.0170&&0.0164 &&& 0.0387 &&0.0208\\
		&225&& 0.0057 && 0.0049&&&0.0184  &&0.0102\\ 
		&400&&  0.0055 && 0.0049 &&& 0.0128 &&0.0073\\  \hline
		10&64&&0.0120&& 0.0124 &&&0.0212 &&0.0143\\
		&100&& 0.0107&&0.0085 &&& 0.0023 &&0.0012\\
		&225&& 0.0048 && 0.0038&&&0.0124  &&0.0060\\ 
		&400&&  0.0034 && 0.0025 &&& 0.0079 &&0.0042\\  \hline
		15&64&&0.0088&& 0.0088 &&&0.0164 &&0.0106\\
		&100&& 0.0079&&0.0060 &&& 0.0151 &&0.0082\\
		&225&& 0.0037 && 0.0028&&&0.0107  &&0.0046\\ 
		&400&&  0.0025 && 0.0017 &&& 0.0061 &&0.0032\\  \hline
	\end{tabular}
	\caption{Average error (over $500$ replicates) of the CASE given in (\ref{CASE}), for regression models M1  and M2, using NW  and LL type  estimators. Errors are generated from a von Mises distribution with different concentration parameters ($\kappa=5, 10, 15$). Bandwidth matrix is selected by cross-validation.}\label{table:simus}
\end{table}

Numerical outputs are completed with some additional plots. As an illustration of the correct performance of NW and LL type estimators, Figure \ref{figure:regression} shows the theoretical regression functions for models M1 and M2 (left panels) and the corresponding average, over 500 replicates, of the estimates, using the specific scenarios considered in Figure \ref{figure:sim} (NW and LL estimates in the center and right panels, respectively). Notice that, for comparison purposes, the theoretical regression functions are plotted  in a $100\times 100$ regular grid of the explanatory variables (the same grid where the estimations were computed). Plots in the top row present the results for the data genera\-ted from model M1 and those in the bottom row for model M2. Although both estimators have a similar and correct  behavior, the LL estimator seems to show a slightly better performance, at least, for these samples. More reliable comparisons between NW  and LL type  estimators can be performed computing the circular bias (CB), the circular variance (CVAR), and the circular mean squared error (CMSE) for both estimators, in a grid of values of the explanatory variables. These quantities, at a point $\mathbf{x}$, are defined as:
\begin{equation}
{\rm CB}[\hat{m}_{\mathbf{H}}(\mathbf{x};p)]={\rm E}\{\sin[\hat{m}_{\mathbf{H}}(\mathbf{x};p) - m(\mathbf{x})]\},
\label{cb}
\end{equation}
\begin{equation}
{\rm CVAR}[\hat{m}_{\mathbf{H}}(\mathbf{x};p)]={\rm E}\{1-\cos[\hat{m}_{\mathbf{H}}(\mathbf{x};p) - \mu(\mathbf{x;p})]\},
\label{cvar}
\end{equation}
\begin{equation}
{\rm CMSE}[\hat{m}_{\mathbf{H}}(\mathbf{x};p)]={\rm E}\{1-\cos[m(\mathbf{x})-\hat{m}_{\mathbf{H}}(\mathbf{x};p)]\},
\label{cmse}
\end{equation}
where $\mu(\mathbf{x;p})$ in CVAR denotes the circular mean of $\hat{m}_{\mathbf{H}}(\mathbf{x};p)$. Notice that, using Taylor expansions,  equations (\ref{cb}), (\ref{cvar}) and (\ref{cmse}) are equivalent to the Euclidean versions of these expressions \citep{kim2017multivariate}.

Figures \ref{figure:bias_var_CMSE_m1} and \ref{figure:bias_var_CMSE_m2} show, in the scenarios considered in Figure \ref{figure:sim}, the CB, CVAR and CMSE computed in a $100\times 100$ regular grid of the explanatory variables, when using NW (top row) and LL (bottom row) fits, for models M1 and M2, respectively. The expectations in (\ref{cb}), (\ref{cvar}) and (\ref{cmse}) are approxi\-mated by the averages over the 500 replicates generated. It can be seen that the NW type estimator ($p=0$) provides larger biases and smaller variances than the LL type estimator ($p=1$) in both settings. However, the CMSE is smaller for the LL fit in most of the grid points. Similar results for the CB, CVAR and CMSE for both estimators were obtained in other scenarios.

\begin{figure}[t]
	\centering
	\includegraphics[width=1\textwidth]{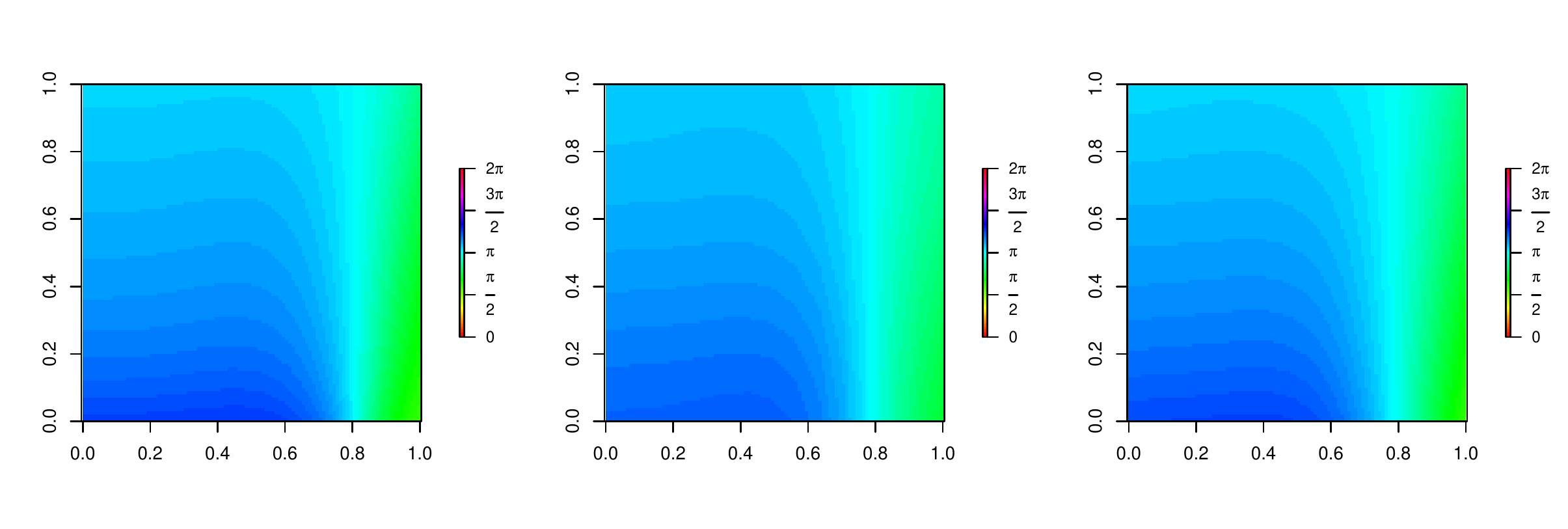}
	\includegraphics[width=1\textwidth]{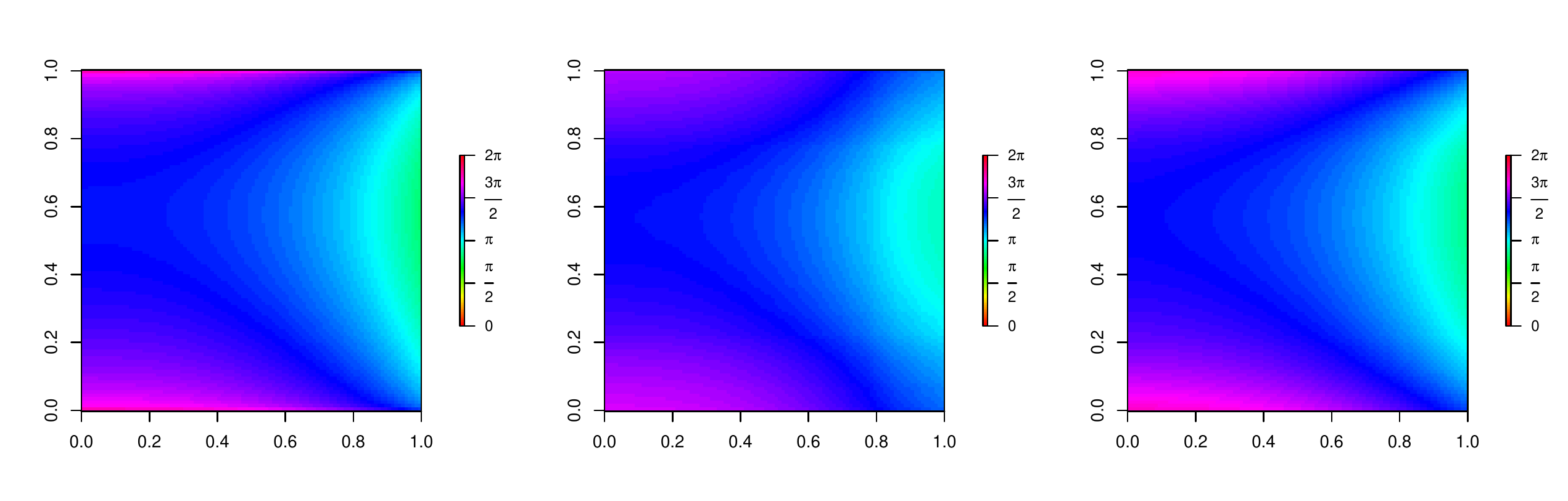}
	\caption{Theoretical regression function (left), jointly with the  average, over 500 replicates, of NW (center) and LL (right)  estimates, using the specific scenarios considered in Figure \ref{figure:sim}, for model M1 (top row) and model M2 (bottom row).}
	\label{figure:regression}
\end{figure}

\begin{figure}[t]
	\centering
	\includegraphics[width=1\textwidth]{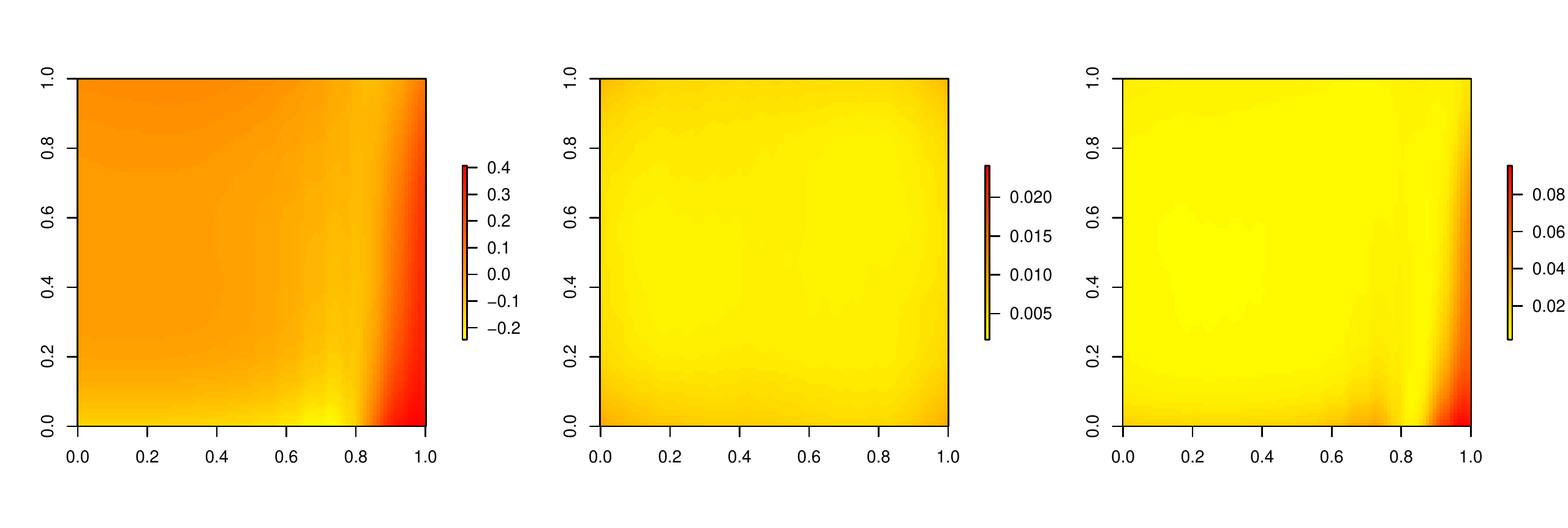}
	\includegraphics[width=1\textwidth]{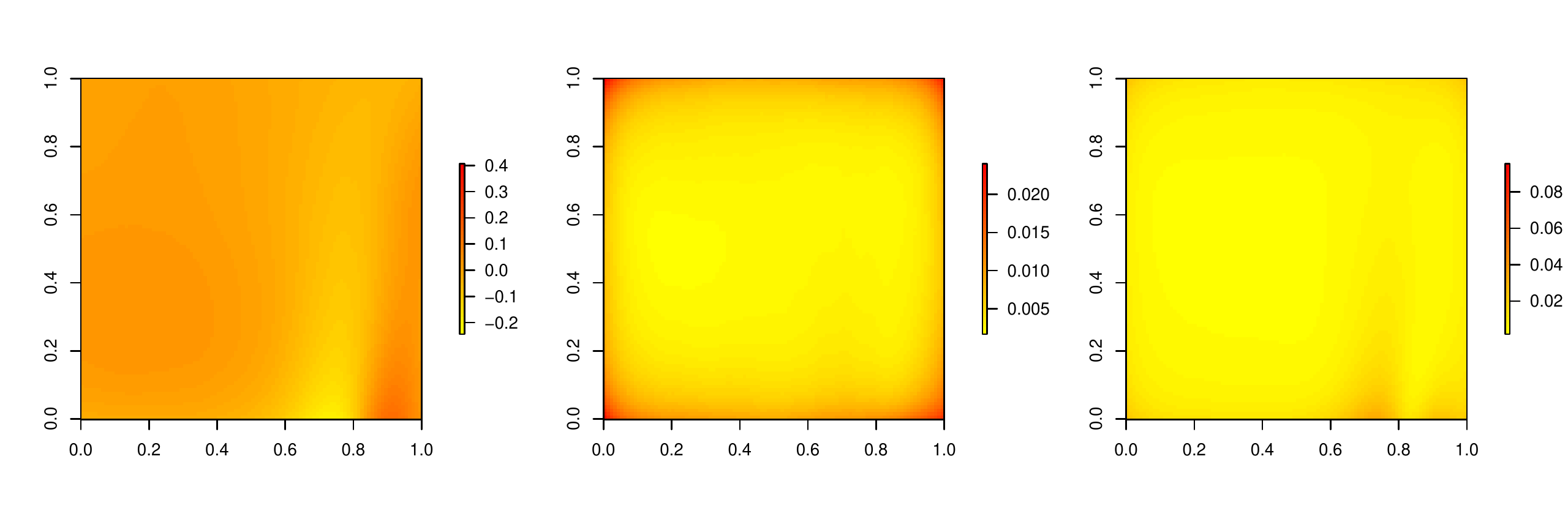}
	\caption{Circular bias (left), circular variance (center) and CMSE (right) surfaces for model M1 for a $100\times 100$ regular grid, using NW (top row) and LL (bottom row) fits. $n=225$ and von Mises errors with zero mean and $\kappa=5$.}
	\label{figure:bias_var_CMSE_m1}
\end{figure}

\begin{figure}[t]
	\centering
	\includegraphics[width=1\textwidth]{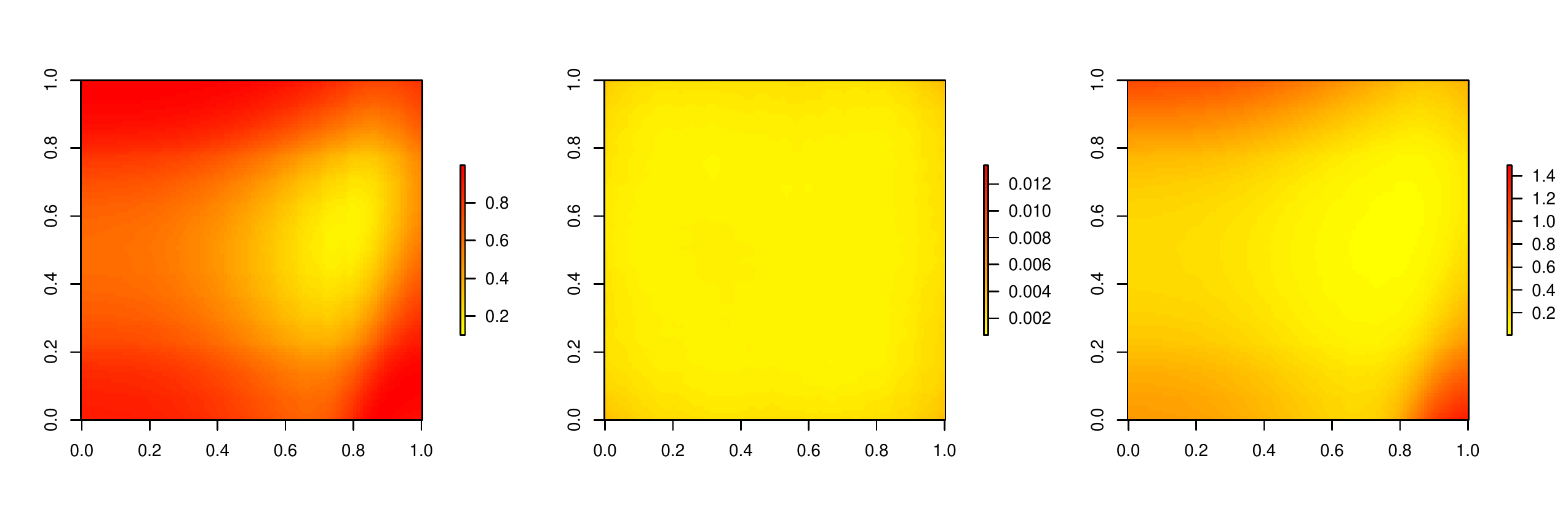}
	\includegraphics[width=1\textwidth]{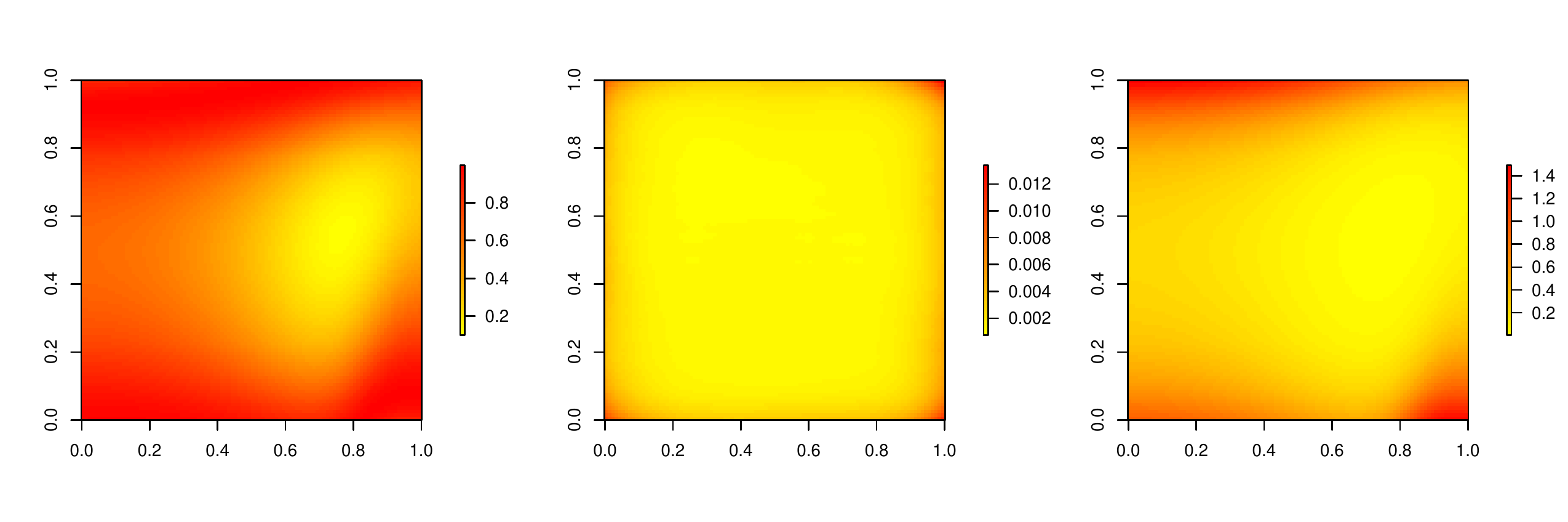}
	\caption{Circular bias (left), circular variance (center) and CMSE (right) surfaces for model M2 for a $100\times 100$ regular grid, using NW (top row) and LL (bottom row) fits. $n=225$ and von Mises errors with zero mean and $\kappa=15$.}
	\label{figure:bias_var_CMSE_m2}
\end{figure}

\section{Real data example}
\label{sec:example}
A real data example is presented in order to illustrate the application of the proposed estimators. Based on the simulation study, where the LL type estimator presented a slightly better performance than the NW one, just results corres\-ponding to $\hat{m}_{\mathbf{H}}(\mathbf{x};1)$ are provided for real data.  The orientation of two species of sand hoppers, considering parametric multiple regression methods for circular responses, following the proposal in \cite{presnell1998projected}, were analyzed in \cite{scapini2002multiple}. This is a parametric approach that assumes a projected normal distribution for the scape directions and the corresponding parameters (circular mean and mean resultant vector) depend on the explanatory variables through a linear model. We refer to \cite{scapini2002multiple} and \cite{marchetti2003use} for details on the experiment, a thorough data analysis and sound biological conclusions. Dealing with the same data set, in \cite{marchetti2003use}, the authors conclude that the orientation is different for the two sexes (males and females) and they explicitly mention that nonparametric smoothers are flexible tools that may suggest unexpected features of the data. So, the illustration with our proposal is a first attempt to analyze this data set with nonparametric tools in order to check how orientation (in degrees) behaves when temperature (in Celsius degrees) and (relative) humidity (in percentage) are included as covaria\-tes. For illustration purposes, only observations corresponding to (relative) humidity values larger than 45\% are considered in this analysis. The corres\-ponding data sets are plotted in  Figure \ref{figure:data} (males in the left panel and females in the right panel), being the sample sizes $n=330$ and $n=404$, for male and female sand hoppers, respectively.
\begin{figure}[hbt]
	\centering
	\includegraphics[width=0.55\textwidth]{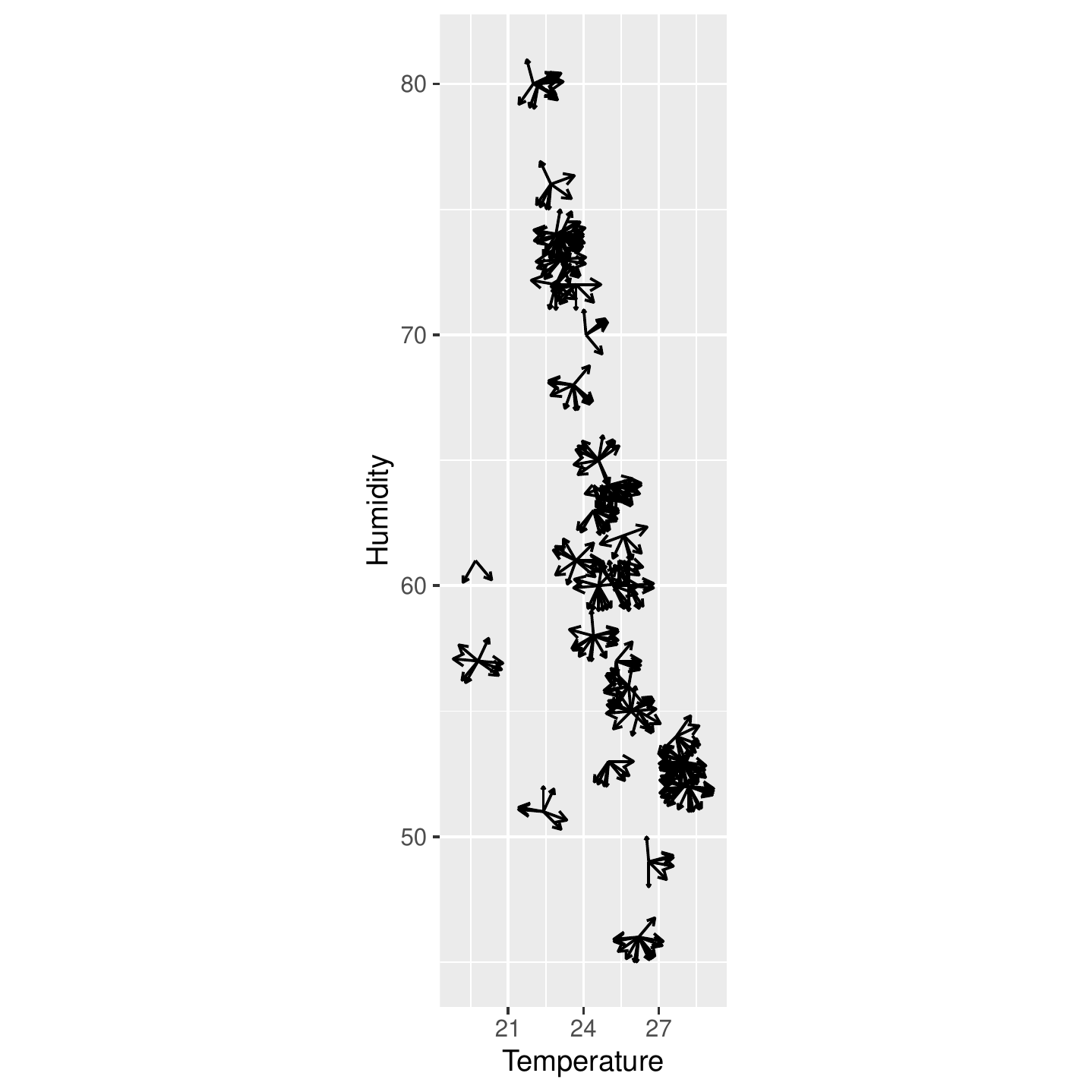}\hspace{-2cm}
	\includegraphics[width=0.55\textwidth]{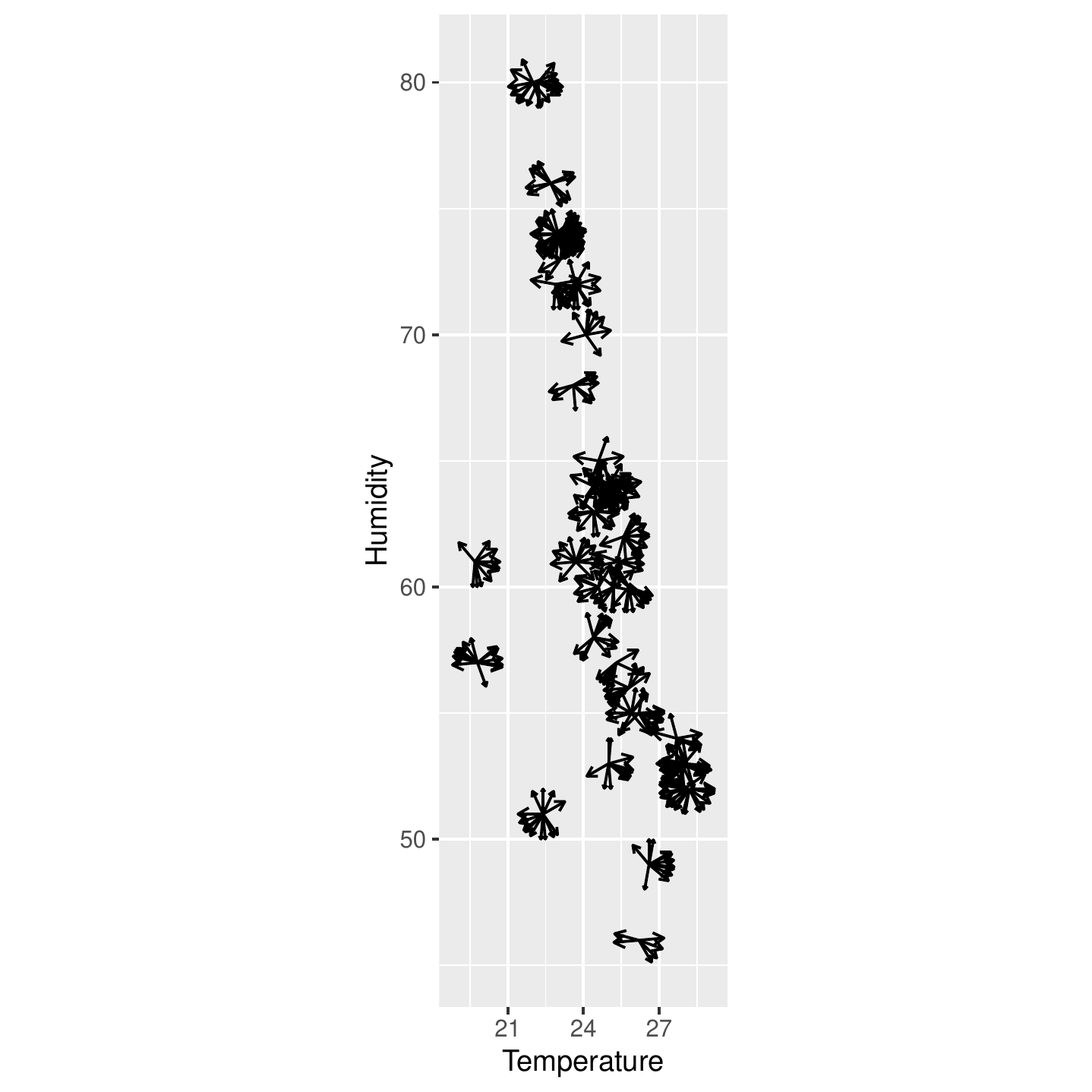}
	\caption{Observed orientation of male (left) and female (right) sand hoppers as a function of temperature and relative humidity.}
	\label{figure:data}
\end{figure}

Figure \ref{figure:data_est} shows the LL estimates for male (left) and female (right) mean orientations, considering temperature (horizontal axis) and relative humidi\-ty (vertical axis) as covariates.  Note that measurements of temperature and humidity are the same for males and females, given that these values corres\-pond to experimental conditions. In this example, unlike in the simulation experiments, the CV bandwidth matrix has been searched in the family of the symmetric and definite positive full bandwidth matrices, using an optimization algorithm based on the Nelder--Mead simplex method described in \cite{lagarias1998convergence}. Using the initial bandwidth matrix $\mathbf{H}_{init}=1.5 \cdot \mbox{diag}\left\{\hat{\sigma}_{X_1} ,\hat{\sigma}_{X_2}  \right\}$, the algorithm converged to
$$
\mathbf{H}_{\rm CV}^{m}=\left[ \begin{array}{cc}
2.7781 & 0.0001 \\ 
0.0001 & 15.2529
\end{array} \right],
$$
for males, and to
$$
\mathbf{H}_{\rm CV}^{f}=\left[\begin{array}{cc}
4.0930 & -0.0009 \\ 
-0.0009 & 13.1937
\end{array} \right],
$$
for females, where $\hat{\sigma}_{X_1}$ and $\hat{\sigma}_{X_2}$ denote the sample standard deviations of the covariates $X_1=$``tem\-pe\-ra\-tu\-re'' and $X_2=$``humidity'', respectively. As in the previous section, a multivariate Epanechnikov kernel is considered.  Note that the estimation grid of explanatory variables on which the estimates of the mean were computed was constructed by overlying the survey values of temperature and humidity with a $100 \times 100$ grid and, then, dropping every grid point that did
not satisfy one of the following two requirements: (a) it is within 15 ``grid cell length''  from an
observation point, or (b) the calculation for the estimates of the sine and cosine components at that grid point uses a smoothing vector that is sufficiently stable.
Both requirements are admittedly somewhat arbitrary, but they represent
a compromise between coverage over the region of interest and ability to avoid singular design
matrices. Even with these restrictions, some of the estimates for low temperature values (around 20 Celsius degrees) seem to be spurious, specially in the case of male individuals. This can be due to data sparseness or a bounda\-ry effect, two well-known situations where kernel-based smoothing methods may present certain drawbacks. Trying to avoid some of these problems and taking into account that there are repeated values of the covariates, possibly due to rounded measurements, additional estimates have been obtained after jittering the original data (the corresponding plots are not shown), obtaining estimates that follow similar patterns to those shown in Figure \ref{figure:data_est}. The \emph{mean direction} followed by male and female sand hoppers is different for some  temperature and humidity conditions. Seawards orientation was roughly $7\pi/4$, so it can be seen that females are more seawards oriented than males, specially for mid to low values of temperature. 
\begin{figure}[t]
	\centering
	\includegraphics[width=0.49\textwidth]{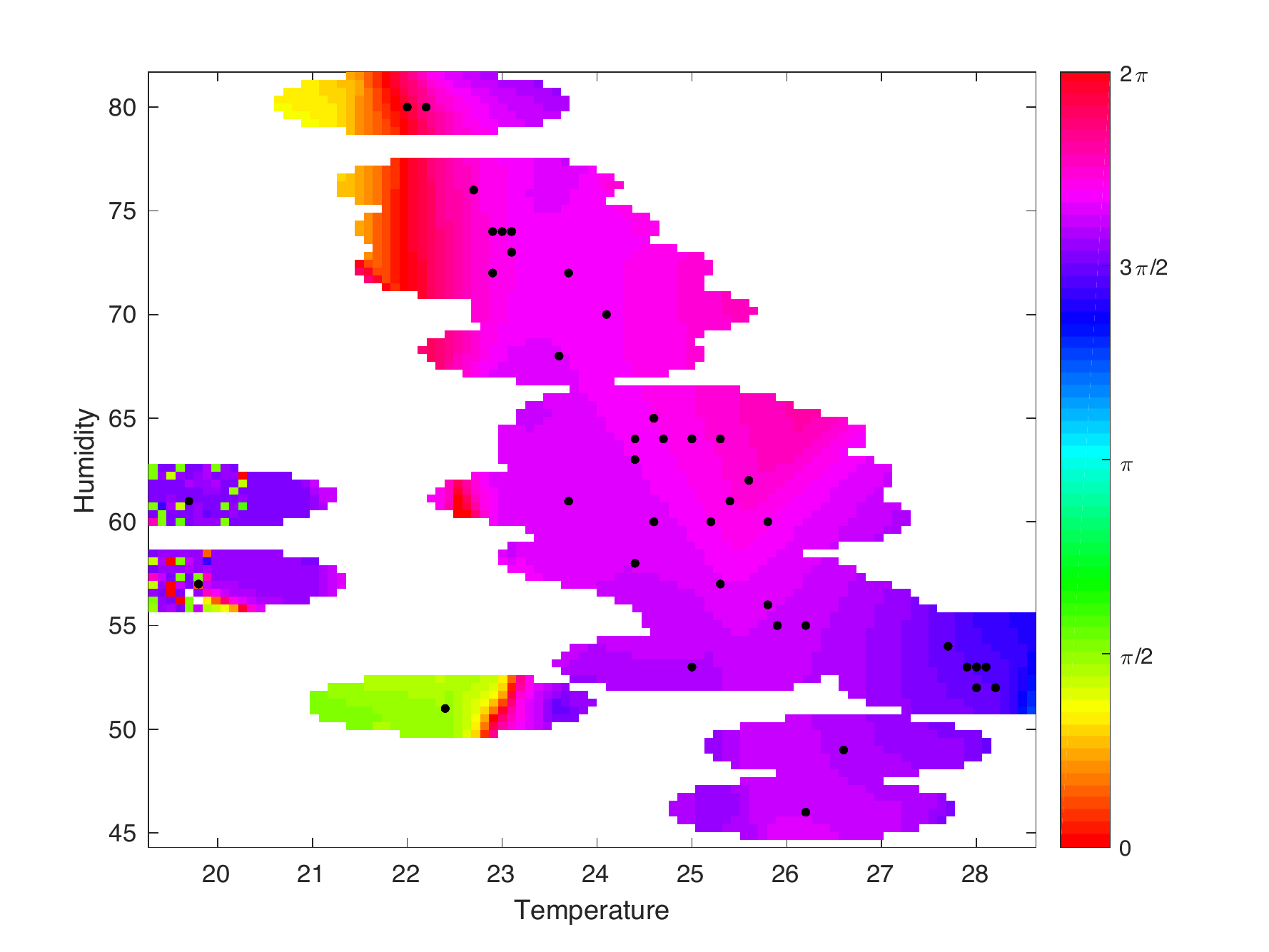}
	\includegraphics[width=0.49\textwidth]{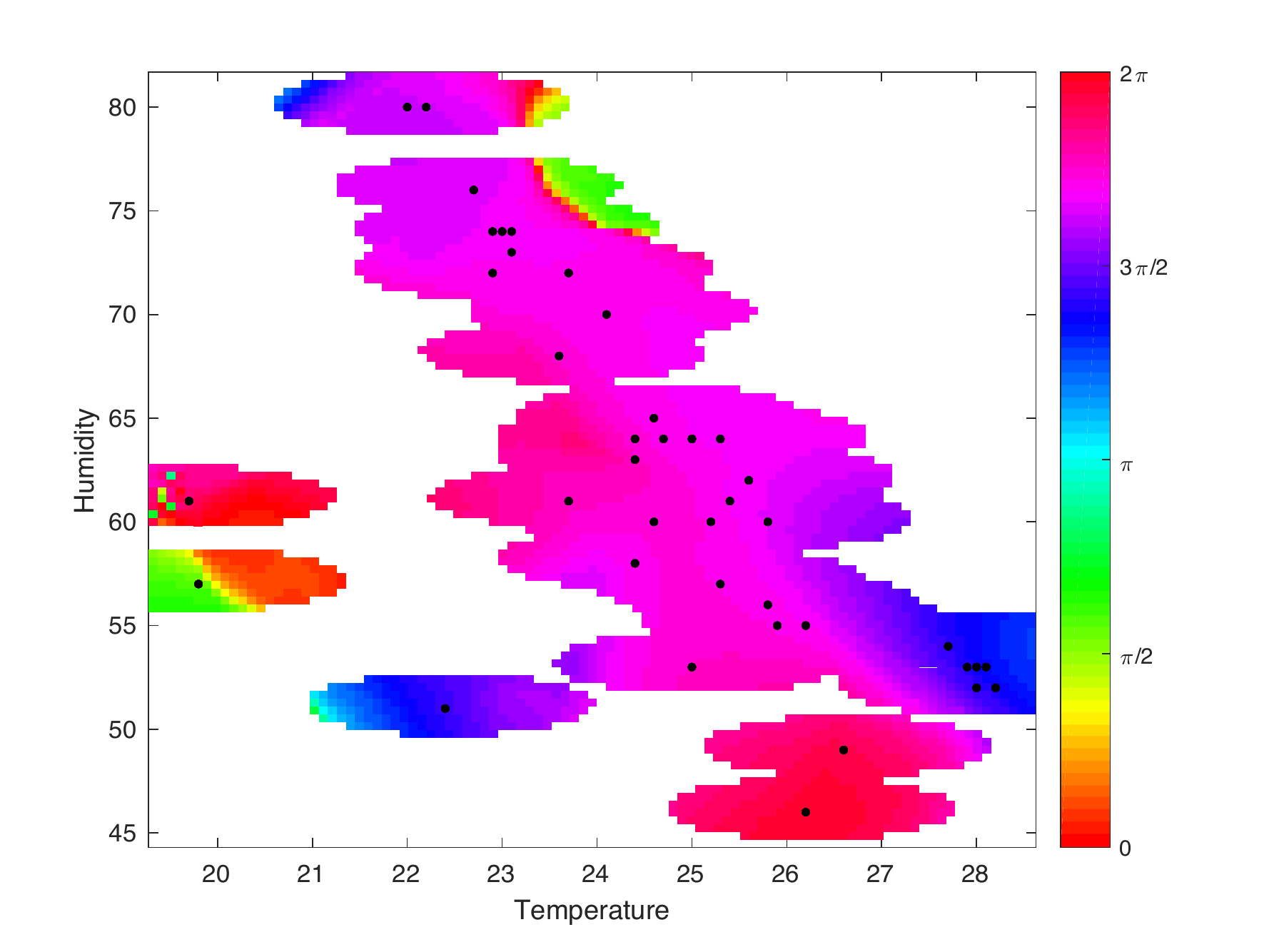}
	\caption{Estimates of the mean orientation of males (left) and females (right) sand hoppers, considering a LL estimator with cross-validation bandwidth. Horizontal axis: temperature, in Celsius degrees. Vertical axis: relative humidity, in percentage.}
	\label{figure:data_est}
\end{figure}

\section*{Discussion}\label{discussion}
Nonparametric regression estimation for circular responses and $\mathbb R^d$-valued covariates is studied in this paper. Our proposal considers kernel-based approa\-ches, with special attention on NW  and LL type  estimators in general dimension, and for higher order polynomials in the one-dimensional case. Asymptotic conditional bias and variance are derived and the performance of the estimators is assessed in a simulation study.

For practical implementation, the selection of a $d$-dimensional bandwidth matrix is required. 
In the regression Euclidean context, the bandwidth selection problem has been widely addressed in the last decades \citep[see, for example][where a review on bandwidth selection methods for kernel regression is provided]{kohler2014review}. More related to the topic of the present paper,  a rule-of-thumb and a bandwidth rule for selection scalar or diagonal bandwidth matrices for multivariate local linear regression with real-valued response and $\mathbb R^d$-valued covariate is derived in \cite{yang1999multivariate}. Also in this context, in \cite{ana_wences}, a bootstrap method to estimate the mean squared error and the smoothing parameter for the multidimensional regression local linear estimator is proposed. However, in the framework of nonparametric regression methods for circular variables, the research on bandwidth selection is very scarce or non-existent. Our practical results are derived with a cross-validation bandwidth given that, up to our knowledge, there are no other bandwidth selectors available in this context. The design of alternative procedures to select the bandwidth matrix for the estimators studied in this paper based, for example, on bootstrap methods are indeed of great interest. This problem is out of the scope of the present paper, but it is an interesting topic of research for a future study.

Once the problem of including a $\mathbb R^d$-valued covariate for explaining the behaviour of a circular response is solved, it seems natural to think about the consideration of covariates of different nature. Since the proposed estimator is constructed by considering the atan2 of the smooth estimators of the regression functions for the sine and cosine components of the response, an adaptation of our proposal for different types of covariates implies the use of suitable weights. For instance, if a  spherical (circular, as a particular case) or a mixture of spheri\-cal and real-valued covariates are considered to influence a circular response, weights for estimating the sine and cosine components could be constructed following the ideas in \cite{garcia2013kernel} for cylindrical density estimation. If a categorical covariate is included in the model, a similar approach to the one in \cite{racine2004nonparametric} or in  \cite{li_racine} could be also followed. In all these cases, bandwidth matrices should be selected, and cross-validation techniques could be applied.

The results obtained in Theorem \ref{C_teo2} and \ref{C_teo3} can be extended to an arbitrary dimension $d$ of the space of the covariates by using the asymptotic properties for  $\hat m_{j,\mathbf H}(\mathbf x;p)$ provided in \cite{gu2015multivariate} who considered the leading term of biases and varian\-ces of multivariate  local polynomial estimators of general order $p$.  Results on the asymptotic distribution of multivariate local polynomial estimators are also provided in \cite{gu2015multivariate}. The joint asymptotic normality of $\hat m_{1,\mathbf H}(\mathbf x;p)$ and $\hat m_{2,\mathbf H}(\mathbf x;p)$ can be used to derive, via the delta-method, the asymptotic distribution of statistics which can be expressed in terms of $\hat m_{1,\mathbf{H}}(\mathbf{x}; p)$ and $\hat m_{2,\mathbf{H}}(\mathbf{x}; p)$. For example, a suitable adaptation of Proposition 3.1 of \cite{jammalamadaka2001topics} can be used to derive the limiting distribution of the tangent of $\hat m_{\mathbf{H}}(\mathbf{x};p)$. 

In our scenario, data generated from the regression model are assumed to be independent. However, in many practical situations, this assumption does not seem reasonable (e.g. data area collected over time or space). The simple construction scheme behind the proposed class of estimators makes possible to easily obtain asymptotic properties in more general frameworks. As an example, when data are not i.i.d. but are realizations of stationary processes satisfying some mixing conditions, the results provided in \cite{masry1996multivariate} can be used. It should be also noted that, when the data exhibit some kind of dependence, although the expression for the estimator will be the same, this structure will affect the estimator variance and should be taking into account to select properly the bandwidth parameter, as in \cite{francisco2005smoothing}.

\section*{Acknowledgements}\label{acknowledgements}
	The authors acknowledge the support from the Xunta de Galicia grant ED481A-2017/361 and the European Union (European Social Fund - ESF). This research has been partially supported by MINECO grants  MTM2016-76969-P and MTM2017-82724-R, and by the Xunta de Galicia (Grupo de Referencia Competitiva  ED431C-2017-38, and Centro de Investigación del SUG ED431G 2019/01), all of them through the ERDF.  The authors thank Prof. Felicita Scapini and his research team 
	who kindly provided the sand hoppers data that are used in this work.  Data were collected within the Project ERB ICI8-CT98-0270 from the European Commission, Directorate General XII Science.

\bibliographystyle{apa-good}
	\bibliography{bibib3_v2}

\section*{Appendix. Proof of the results}
This section is devoted to present the proofs of Theorem \ref{teoNW}, \ref{teoLL}, \ref{C_teo2} and \ref{C_teo3}. More specifically, the asymptotic properties of the proposed nonparametric regression estimator $\hat{m}_{\mathbf{H}}(\mathbf{x};p)$, for $p=0,1$, are established in Theorem \ref{teoNW} and \ref{teoLL}, respectively. For $d=1$, the extensions for $p=2$ and $p=3$ are considered in Theorem \ref{C_teo2} and \ref{C_teo3}, respectively. 

\begin{proof}[Proof of Theorem \ref{teoNW}]
	First, to obtain the bias of $\hat{m}_{\mathbf{H}}(\mathbf{x};0)$, using the same linea\-rization arguments as in the proof of Theorem $1$ of \cite{di2013non},  $\mbox{atan2}(\hat{m}_{1, \mathbf{H}},\hat{m}_{2, \mathbf{H}})$ is expanded in Taylor series around $(m_1,m_2)$, where for simplicity,  $\hat m_{j,\mathbf{H}}$ and $m_j$  denote  $\hat m_{j,\mathbf{H}}(\mathbf{x})$ and $m_j(\mathbf{x})$, respectively, for $j=1,2$, to get
	\begin{eqnarray}\label{atan}
\mbox{atan2}(\hat{m}_{1, \mathbf{H}},\hat{m}_{2, \mathbf{H}})\nonumber&=&\mbox{atan2}(m_1,m_2)+\dfrac{m_2}{m_1^2+m_2^2}(\hat{m}_{1, \mathbf{H}}-m_1)\nonumber\\&&- \dfrac{m_1}{m_1^2+m_2^2}(\hat{m}_{2, \mathbf{H}}-m_2)+\dfrac{m_1m_2}{(m_1^2+m_2^2)^2}(\hat{m}_{2, \mathbf{H}}-m_2)^2\nonumber\\&&- \dfrac{m_1m_2}{(m_1^2+m_2^2)^2}(\hat{m}_{1, \mathbf{H}}-m_1)^2\nonumber\\&&- \dfrac{m_1^2-m_2^2}{(m_1^2+m_2^2)^2}(\hat{m}_{1, \mathbf{H}}-m_1)(\hat{m}_{2, \mathbf{H}}-m_2)\nonumber\\&&+\mathcal{O}\left[(\hat{m}_{1, \mathbf{H}}-m_1)^3\right]+\mathcal{O}\left[(\hat{m}_{2, \mathbf{H}}-m_2)^3\right],\end{eqnarray}

	Taking expectations, noting that ${\rm E}\left[(\hat{m}_{j,\mathbf{H}} -m_j)^2\mid\mathbf{X}_1,\dots,\mathbf{X}_n\right]={\rm Var}(\hat{m}_{j, \mathbf{H}}\mid\mathbf{X}_1,\dots,\mathbf{X}_n)+[E(\hat{m}_{j, \mathbf{H}})-m_j\mid\mathbf{X}_1,\dots,\mathbf{X}_n]^2$, and using the results in Proposition \ref{pro1}, it is obtained that 
	\begin{eqnarray*}		&&{\rm E}[\hat{m}_{\mathbf{H}}(\mathbf{x};0)-m(\mathbf{x})\mid\mathbf{X}_1,\dots,\mathbf{X}_n]\\&=&\dfrac{1}{2}{\dfrac{m_2(\mathbf{x})}{m_1^2(\mathbf{x})+m_2^2(\mathbf{x})}\mu_2(K)\tr\left[\mathbf{H}^2\bm{\mathcal{H}}_{m_1}(\mathbf{x})\right]}\\&&+  \dfrac{m_2(\mathbf{x})}{m_1^2(\mathbf{x})+m_2^2(\mathbf{x})}\dfrac{\mu_2(K)}{f(\mathbf{x})}\bm{\nabla}^T{m_1}(\mathbf{x})\mathbf{H}^2\bm{\nabla}{f}(\mathbf{x})\\&&- \dfrac{1}{2}{\dfrac{m_1(\mathbf{x})}{m_1^2(\mathbf{x})+m_2^2(\mathbf{x})}\mu_2(K)\tr\left[\mathbf{H}^2\bm{\mathcal{H}}_{m_2}(\mathbf{x})\right]}\\&&- \dfrac{m_1(\mathbf{x})}{m_1^2(\mathbf{x})+m_2^2(\mathbf{x})}\dfrac{\mu_2(K)}{f(\mathbf{x})}\bm{\nabla}^T{m_2}(\mathbf{x})\mathbf{H}^2\bm{\nabla}{f}(\mathbf{x})\\&&+  {\mathpzc{o}_{\mathbb{P}}[\tr(\mathbf{H}^2)]}.\end{eqnarray*}
	
	Therefore,
	\begin{eqnarray*}			\lefteqn{{\rm E}[\hat{m}_{\mathbf{H}}(\mathbf{x};0)-m(\mathbf{x})\mid\mathbf{X}_1,\dots,\mathbf{X}_n]}\\&=&\dfrac{1}{2}\dfrac{\mu_2(K)}{m_1^2(\mathbf{x})+m_2^2(\mathbf{x})}\tr\bigg\{\mathbf{H}^2\bigg[m_2(\mathbf{x})\bm{\mathcal{H}}_{m_1}(\mathbf{x})-m_1(\mathbf{x})\bm{\mathcal{H}}_{m_2}(\mathbf{x})\bigg]\bigg\}\\&&+  \dfrac{\mu_2(K)}{[m_1^2(\mathbf{x})+m_2^2(\mathbf{x})]f(\mathbf{x})}\bigg\{\left[m_2(\mathbf{x})\bm{\nabla}^T{m_1}(\mathbf{x})-m_1(\mathbf{x})\bm{\nabla}^T{m_2}(\mathbf{x})\right]\mathbf{H}^2\bm{\nabla}{f}(\mathbf{x})\bigg\}\\&&+  {\mathpzc{o}_{\mathbb{P}}[\tr(\mathbf{H}^2)]}.\end{eqnarray*}
	
		Now, taking into account that 
	\begin{eqnarray}
	{\bm{\nabla}} m(\mathbf{x})&=&\dfrac{1}{\ell^2(x)}\left[{\bm{\nabla}} m_1(\mathbf{x}) m_2(\mathbf{x})-{\bm{\nabla}} m_2(\mathbf{x}) m_1(\mathbf{x})\right],\label{grad}\\
	{\bm{\mathcal{H}}}_{m}(\mathbf{x})&=&	\dfrac{1}{\ell^2(x)}\left[{\bm{\mathcal{H}}}_{m_1}(\mathbf{x})m_2(\mathbf{x})+{\bm{\nabla}} m_1(\mathbf{x}){\bm{\nabla}}^T m_2(\mathbf{x})\right.\nonumber\\&&- \left. {\bm{\nabla}} m_2(\mathbf{x}){\bm{\nabla}}^T m_1(\mathbf{x})-{\bm{\mathcal{H}}}_{m_2}(\mathbf{x})m_1(\mathbf{x})\right]-\dfrac{2}{\ell(x)}{\bm{\nabla}} \ell(\mathbf{x}){\bm{\nabla}}^Tm(x),\label{hess}
	\end{eqnarray}		
	it follows that
	\begin{eqnarray*}
		{\mathbb{E}}[\hat{m}_{\mathbf{H}}(\mathbf{x};0)-m(\mathbf{x})\mid\mathbf{X}_1,\dots,\mathbf{X}_n]&=&\dfrac{1}{2}\mu_2(K){\rm tr}[\mathbf{H}^2{\bm{\mathcal{H}}}_{m}(\mathbf{x})]+\dfrac{\mu_2(K)}{\ell(\mathbf{x})f(\mathbf{x})}{\bm{\nabla}}^Tm(\mathbf{x})\mathbf{H}^2{\bm{\nabla}}  (\ell f)(\mathbf{x})\\&&+\mathpzc{o}_{\mathbb{P}}[{\rm tr}(\mathbf{H}^2)].
	\end{eqnarray*}
	
	To derive the variance, the function $\mbox{atan2}^2(\hat{m}_{1, \mathbf{H}},\hat{m}_{2, \mathbf{H}})$ is expanded  in Taylor series around $(m_1,m_2)$, to obtain
\begin{eqnarray}\label{atan2}	\mbox{atan2}^2(\hat{m}_{1, \mathbf{H}},\hat{m}_{2, \mathbf{H}})\nonumber&=&\mbox{atan2}^2({m}_1,{m}_2)+\dfrac{2\mbox{atan2}(m_1,m_2)m_2}{m_1^2+m_2^2}(\hat{m}_{1, \mathbf{H}}-{m}_1)\nonumber\\&&- \dfrac{2\mbox{atan2}(m_1,m_2)m_1}{m_1^2+m_2^2}(\hat{m}_{2, \mathbf{H}}-{m}_2)\nonumber\\&&+ \dfrac{2\mbox{atan2}(m_1,m_2)m_1m_2}{(m_1^2+m_2^2)^2}(\hat{m}_{2, \mathbf{H}}-{m}_2)^2\nonumber\\&&- \dfrac{2\mbox{atan2}(m_1,m_2)m_1m_2}{(m_1^2+m_2^2)^2}(\hat{m}_{1, \mathbf{H}}-{m}_1)^2\nonumber\\&&- \dfrac{2\mbox{atan}(m_1,m_2)(m_1^2-m_2^2)}{(m_1^2+m_2^2)^2}(\hat{m}_{1, \mathbf{H}}-{m}_1)(\hat{m}_{2, \mathbf{H}}-{m}_2)\nonumber\\&&+ \dfrac{m_1^2}{(m_1^2+m_2^2)^2}(\hat{m}_{2, \mathbf{H}}-{m}_2)^2+\dfrac{m_2^2}{(m_1^2+m_2^2)^2}(\hat{m}_{1, \mathbf{H}}-{m}_1)^2\nonumber\\&&- \dfrac{2m_1m_2}{(m_1^2+m_2^2)^2}(\hat{m}_{1, \mathbf{H}}-{m}_1)(\hat{m}_{2, \mathbf{H}}-{m}_2)\nonumber\\&&+\mathcal{O}\left[(\hat{m}_{1, \mathbf{H}}-{m}_1)^3\right]+\mathcal{O}\left[(\hat{m}_{2, \mathbf{H}}-{m}_2)^3\right].\end{eqnarray}
	
	So,  noting that ${\mathbb{V}{\rm ar}}(\hat{m}_{ \mathbf{H}}\mid\mathbf{X}_1,\dots,\mathbf{X}_n)={\mathbb{E}}\left[(\hat{m}_{\mathbf{H}} )^2\mid\mathbf{X}_1,\dots,\mathbf{X}_n\right]-[\mathbb{E}(\hat{m}_{ \mathbf{H}})\mid\mathbf{X}_1,\dots,\mathbf{X}_n]^2$ and taking expectations in the Taylor expansions (\ref{atan}) and (\ref{atan2}), it can be obtained that the conditional variance is:
	\begin{eqnarray*}	\lefteqn{{\rm Var}[\hat{m}_{\mathbf{H}}(\mathbf{x};0)\mid\mathbf{X}_1,\dots,\mathbf{X}_n]}\\&=&\dfrac{m_1^2(\mathbf{x})}{\left[m_1^2(\mathbf{x})+m_2^2(\mathbf{x})\right]^2}{\rm Var}[\hat{m}_{2, \mathbf{H}}(\mathbf{x};0)\mid\mathbf{X}_1,\dots,\mathbf{X}_n]\\&&+  \dfrac{m_2^2(\mathbf{x})}{\left[m_1^2(\mathbf{x})+m_2^2(\mathbf{x})\right]^2}{\rm Var}[\hat{m}_{1, \mathbf{H}}(\mathbf{x};0)\mid\mathbf{X}_1,\dots,\mathbf{X}_n]\\&&- \dfrac{2m_1(\mathbf{x})m_2(\mathbf{x})}{\left[m_1^2(\mathbf{x})+m_2^2(\mathbf{x})\right]^2}\mbox{Cov}[\hat{m}_{1, \mathbf{H}}(\mathbf{x};0),\hat{m}_{2, \mathbf{H}}(\mathbf{x};0)\mid\mathbf{X}_1,\dots,\mathbf{X}_n]\\&&+  \mathcal{O}\left[(\hat{m}_{1, \mathbf{H}}(\mathbf{x};0)-{m}_1(\mathbf{x}))^3\right]+\mathcal{O}\left[(\hat{m}_{2, \mathbf{H}}(\mathbf{x};0)-{m}_2(\mathbf{x}))^3\right].\end{eqnarray*}
	
	Regarding the conditional covariance between $\hat{m}_{1, \mathbf{H}}(\mathbf{x};0)$ and $\hat{m}_{2, \mathbf{H}}(\mathbf{x};0)$, it follows that	
	\begin{eqnarray}
		\lefteqn{\mbox{Cov}[\hat{m}_{1, \mathbf{H}}(\mathbf{x};0),\hat{m}_{2, \mathbf{H}}(\mathbf{x};0)\mid\mathbf{X}_1,\dots,\mathbf{X}_n]}\nonumber\\&=&\dfrac{\sum_{i=1}^{n}\sum_{j=1}^{n}K_\mathbf{H}(\mathbf{X}_i-\mathbf{x})K_\mathbf{H}(\mathbf{X}_j-\mathbf{x})}{\sum_{i=1}^{n}K_\mathbf{H}(\mathbf{X}_i-\mathbf{x})\sum_{j=1}^{n}K_\mathbf{H}(\mathbf{X}_j-\mathbf{x})}\mbox{Cov}[\sin(\Theta_i),\cos(\Theta_j)\mid\mathbf{X}_1,\dots,\mathbf{X}_n]\nonumber\\&=&\dfrac{\sum_{i=1}^{n}K^2_\mathbf{H}(\mathbf{X}_i-\mathbf{x})c(\mathbf{X}_i)}{\left[\sum_{i=1}^{n}K_\mathbf{H}(\mathbf{X}_i-\mathbf{x})\right]^2}\nonumber\\ &=&\dfrac{R(K)c(\mathbf{x})}{n\abs{\mathbf{H}}f(\mathbf{x})}+\mathpzc{o}_{\mathbb{P}}\left(\dfrac{1}{n\abs{\mathbf{H}}}\right).
	\label{cov_m1_m2}
	\end{eqnarray}
	
	Therefore, using (\ref{cov_m1_m2}) and Proposition \ref{pro1}, one gets that
	\begin{eqnarray*}{\rm Var}[\hat{m}_{\mathbf{H}}(\mathbf{x};0)\mid\mathbf{X}_1,\dots,\mathbf{X}_n]&=&\frac{1}{n \abs{\mathbf{H}}}R(K)\dfrac{m_1^2(\mathbf{x})s_2^2(\mathbf{x})}{\left[m_1^2(\mathbf{x})+m_2^2(\mathbf{x})\right]^2f(\mathbf{x})}\\&&+  \frac{1}{n \abs{\mathbf{H}}}R(K)\dfrac{m_2^2(\mathbf{x})s_1^2(\mathbf{x})}{\left[m_1^2(\mathbf{x})+m_2^2(\mathbf{x})\right]^2f(\mathbf{x})}\\&&- \dfrac{2}{n\abs{\mathbf{H}}}R(K)\dfrac{m_1(\mathbf{x})m_2(\mathbf{x})c(\mathbf{x})}{\left[m_1^2(\mathbf{x})+m_2^2(\mathbf{x})\right]^2f(\mathbf{x})}\\&&+  \mathpzc{o}_{\mathbb{P}}\left(\dfrac{1}{n\abs{\mathbf{H}}}\right).\end{eqnarray*}
	
	Taking into account that $m_1(\textbf{x})=f_1(\textbf{x})\ell(\textbf{x})$ and $m_2(\textbf{x})=f_2(\textbf{x})\ell(\textbf{x})$ and using (\ref{eq:relation_DM}), it is obtained that
		\begin{eqnarray*}{\rm Var}[\hat{m}_{\mathbf{H}}(\mathbf{x};0)\mid\mathbf{X}_1,\dots,\mathbf{X}_n]&=&\dfrac{R(K)\sigma^2_1(\mathbf{x})}{n\abs{\mathbf{H}}\ell^2(\mathbf{x})f(\mathbf{x})}+\mathpzc{o}_{\mathbb{P}}\left(\dfrac{1}{n\abs{\mathbf{H}}}\right).\end{eqnarray*}
\end{proof}

\begin{proof}[Proof of Theorem \ref{teoLL}]
	To obtain the bias of $\hat{m}_{\mathbf{H}}(\mathbf{x};1)$, following the arguments used in the proof of Theorem \ref{teoNW} and using results in Proposition \ref{pro2}, one gets that
	\begin{eqnarray*}	\lefteqn{	{\rm E}[\hat{m}_{\mathbf{H}}(\mathbf{x};1)-m(\mathbf{x})\mid\mathbf{X}_1,\dots,\mathbf{X}_n]}\\&=&\frac{1}{2}\mu_2(K) \dfrac{m_2(\mathbf{x})}{m_1^2(\mathbf{x})+m_2^2(\mathbf{x})}\tr\left[\mathbf{H}^2 \bm{\mathcal{H}}_{m_1}(\mathbf{x})\right]\\&&- \frac{1}{2}\mu_2(K) \dfrac{m_1(\mathbf{x})}{m_1^2(\mathbf{x})+m_2^2(\mathbf{x})}\tr\left[\mathbf{H}^2 \bm{\mathcal{H}}_{m_2}(\mathbf{x})\right]+{\mathpzc{o}_{\mathbb{P}}[\tr(\mathbf{H}^2)]}\\&=&\dfrac{1}{2}{\dfrac{\mu_2(K)}{m_1^2(\mathbf{x})+m_2^2(\mathbf{x})}\tr\left\{\mathbf{H}^2\left[m_2(\mathbf{x})\bm{\mathcal{H}}_{m_1}(\mathbf{x})-m_1(\mathbf{x})\bm{\mathcal{H}}_{m_2}(\mathbf{x})\right]\right\}}\\&&+  {\mathpzc{o}_{\mathbb{P}}[\tr(\mathbf{H}^2)]}.\end{eqnarray*}
	
	Considering (\ref{grad}) and (\ref{hess}), it can be obtained that  
	\begin{eqnarray*}
	\lefteqn{{\mathbb{E}}[\hat{m}_{\mathbf{H}}(\mathbf{x};1)-m(\mathbf{x})\mid\mathbf{X}_1,\dots,\mathbf{X}_n]}\\&=&\dfrac{1}{2}{\mu_2(K)}{\rm tr}\left\{\mathbf{H}^2\left[{\bm{\mathcal{H}}}_{m}(\mathbf{x})+\dfrac{2}{\ell(x)}{\bm{\nabla}} \ell(\mathbf{x}){\bm{\nabla}}^Tm(x)\right]\right\}+\mathpzc{o}_{\mathbb{P}}[{\rm tr}(\mathbf{H}^2)]\\&=&\dfrac{1}{2}\mu_2(K){\rm tr}[\mathbf{H}^2{\bm{\mathcal{H}}}_{m}(\mathbf{x})]+\dfrac{\mu_2(K)}{\ell(\mathbf{x})}{\bm{\nabla}}^Tm(\mathbf{x})\mathbf{H}^2{\bm{\nabla}}  \ell (\mathbf{x})+\mathpzc{o}_{\mathbb{P}}[{\rm tr}(\mathbf{H}^2)]
\end{eqnarray*} 
	
	As for the variance of $\hat{m}_{\mathbf{H}}(\mathbf{x};1)$, the same arguments as those employed in the proof of Theorem \ref{teoNW} to obtain the variance of $\hat{m}_{\mathbf{H}}(\mathbf{x};0)$ can be used. In this case, the conditional covariance between $\hat{m}_{1, \mathbf{H}}(\mathbf{x};1)$ and $\hat{m}_{2, \mathbf{H}}(\mathbf{x};1)$ is:
	\begin{eqnarray*}
		\lefteqn{	\mbox{Cov}[\hat{m}_{1, \mathbf{H}}(\mathbf{x};1),\hat{m}_{2, \mathbf{H}}(\mathbf{x};1)\mid\mathbf{X}_1,\dots,\mathbf{X}_n]\nonumber}\\&=&\mathbf{e}_1^T(\bm{\mathcal{X}}_{\mathbf{x}}^T\bm{\mathcal{W}}_{\mathbf{x}}\bm{\mathcal{X}}_{\mathbf{x}})^{-1}\bm{\mathcal{X}}_{\mathbf{x}}^T\bm{\mathcal{W}}_{\mathbf{x}}\Sigma \bm{\mathcal{W}}_{\mathbf{x}}\bm{\mathcal{X}}_{\mathbf{x}}(\bm{\mathcal{X}}_{\mathbf{x}}^T\bm{\mathcal{W}}_{\mathbf{x}}\bm{\mathcal{X}}_{\mathbf{x}})^{-1}\mathbf{e}_1,
	\end{eqnarray*}
	where $\Sigma$ is the covariance matrix of $\sin(\Theta)$ and $\cos(\Theta)$, whose $(i,j)$ entry is $\Sigma_{i,j}=\mbox{Cov}[\sin(\Theta_i),\cos(\Theta_j)],$ $i,j=1,\dots,n.$
	
	After some calculations, denoting $\mathbf{1}_d$  and  $\mathbf{1}_{d\times d}$ the $d \times 1$ vector and the $d\times d$ matrix with all entries equal to 1, respectively, it can be obtained that
	\begin{eqnarray*}
			\lefteqn{\left(\dfrac{1}{n}\bm{\mathcal{X}}_{\mathbf{x}}^T\bm{\mathcal{W}}_{\mathbf{x}}\bm{\mathcal{X}}_{\mathbf{x}}\right)^{-1}}\\&=&\left(\begin{array}{ll}
			\frac{1}{n}\sum_{i=1}^{n} K_{\mathbf{H}}(\mathbf{X}_i-\mathbf{x}) & \frac{1}{n}\sum_{i=1}^{n} K_{\mathbf{H}}(\mathbf{X}_i-\mathbf{x})(\mathbf{X}_i-\mathbf{x})^T  \\
			\frac{1}{n}\sum_{i=1}^{n} K_{\mathbf{H}}(\mathbf{X}_i-\mathbf{x})(\mathbf{X}_i-\mathbf{x})  & \frac{1}{n}\sum_{i=1}^{n} K_{\mathbf{H}}(\mathbf{X}_i-\mathbf{x})(\mathbf{X}_i-\mathbf{x}) (\mathbf{X}_i-\mathbf{x}) ^T
		\end{array}\right)^{-1}\\&=&
		\left(
		\begin{array}{ll}
			f^{-1}(\mathbf{x})+\mathpzc{o}_{\mathbb{P}}(1) & -f^{-2}(\mathbf{x})\nabla f(\mathbf{x})^T+\mathpzc{o}_{\mathbb{P}}(\mathbf{1}^T_d) \\
			-f^{-2}(\mathbf{x})\nabla f(\mathbf{x})+\mathpzc{o}_{\mathbb{P}}(\mathbf{1}_d)  & \left[\mu_2(K)f(\mathbf{x})\mathbf{H}^2\right]^{-1}+\mathpzc{o}_{\mathbb{P}}(\mathbf{H}\mathbf{1}_{d\times d}\mathbf{H})	\end{array}
		\right).\end{eqnarray*}	
	
	Moreover, denoting
	\begin{eqnarray*}s_{1,n}(\mathbf{x})&=&		\frac{1}{n^2}\sum_{i=1}^{n} K^2_{\mathbf{H}}(\mathbf{X}_i-\mathbf{x})c(\mathbf{X}_i),\\
		s_{2,n}(\mathbf{x})&=&	\frac{1}{n^2}\sum_{i=1}^{n} K^2_{\mathbf{H}}(\mathbf{X}_i-\mathbf{x})(\mathbf{X}_i-\mathbf{x})c(\mathbf{X}_i) ,\\
		s_{3,n}(\mathbf{x})&=& \frac{1}{n^2}\sum_{i=1}^{n} K^2_{\mathbf{H}}(\mathbf{X}_i-\mathbf{x})(\mathbf{X}_i-\mathbf{x}) (\mathbf{X}_i-\mathbf{x}) ^Tc(\mathbf{X}_i),	
	\end{eqnarray*} 
	it follows that
	\begin{eqnarray*}	\dfrac{1}{n^2}\bm{\mathcal{X}}_{\mathbf{x}}^T\bm{\mathcal{W}}_{\mathbf{x}}\Sigma\bm{\mathcal{W}}_{\mathbf{x}}	\bm{\mathcal{X}}_{\mathbf{x}}&=&\left(\begin{array}{ll}
			s_{1,n}(\mathbf{x}) &  s^T_{2,n}(\mathbf{x}) \\
			s_{2,n}(\mathbf{x}) & s_{3,n}(\mathbf{x})
		\end{array}\right)\\&=&\dfrac{1}{n\abs{\mathbf{H}}}\left(
		\begin{array}{ll}
			c(\mathbf{x})f(\mathbf{x})R(K)+\mathpzc{o}_{\mathbb{P}}(1) & \mathpzc{o}_{\mathbb{P}}(\mathbf{1}^T_d) \\
			\mathpzc{o}_{\mathbb{P}}(\mathbf{1}_d)  & \mathpzc{o}_{\mathbb{P}}(\mathbf{1}_{d\times d}).	\end{array}
		\right),\end{eqnarray*}

	Consequently, by straightforward calculations, one gets
	$$
	\mbox{Cov}[\hat{m}_{1, \mathbf{H}}(\mathbf{x};1),\hat{m}_{2, \mathbf{H}}(\mathbf{x};1)\mid\mathbf{X}_1,\dots,\mathbf{X}_n]\nonumber=\dfrac{R(K)c(\mathbf{x})}{n\abs{\mathbf{H}}f(\mathbf{x})}+\mathpzc{o}_{\mathbb{P}}\left(\dfrac{1}{n\abs{\mathbf{H}}}\right),
	$$
	and the variance of $\hat{m}_{\mathbf{H}}(\mathbf{x};1)$ is:
	$$
	{\rm Var}[\hat{m}_{\mathbf{H}}(\mathbf{x};1)\mid\mathbf{X}_1,\dots,\mathbf{X}_n]=\dfrac{R(K)\sigma^2_1(\mathbf{x})}{n\abs{\mathbf{H}}\ell^2(\mathbf{x})f(\mathbf{x})}+\mathpzc{o}_{\mathbb{P}}\left(\dfrac{1}{n\abs{\mathbf{H}}}\right).
	$$
\end{proof}
\begin{proof}[Proof of Theorem \ref{C_teo2}]
	Using the asymptotic properties of the  local quadratic estimator, close expressions of  ${\mathbb{E}}[\hat m_{j, h}(x;2)\mid X_1,\ldots, X_n]$ and ${{\mathbb{V}{\rm ar}}}[\hat m_{j, h}(x;2)\mid X_1,\ldots, X_n]$, for $j=1,2,$ can be obtained. To derive the bias of $\hat{m}_{h}({x};2)$, following the arguments used in the proof of Theorem \ref{teoNW} and \ref{teoLL}, one gets that
	\begin{eqnarray*}		\lefteqn{{\mathbb{E}}[\hat{m}_{{h}}({x};2)-m({x})\mid{X}_1,\dots,{X}_n]}\\&=&\dfrac{h^4 \mu_4(K_{(2)})f'(x)}{3!f(x)}\dfrac{m_2({x})}{m_1^2({x})+m_2^2({x})}m^{(3)}_1(x)+\dfrac{h^4 \mu_4(K_{(2)})}{4!}\dfrac{m_2({x})}{m_1^2({x})+m_2^2({x})}m^{(4)}_1(x)\\&&- \dfrac{h^4 \mu_4(K_{(2)})f'(x)}{3!f(x)}\dfrac{m_1({x})}{m_1^2({x})+m_2^2({x})}m^{(3)}_2(x)-\dfrac{h^4 \mu_4(K_{(2)})}{4!}\dfrac{m_1({x})}{m_1^2({x})+m_1^2({x})}m^{(4)}_2(x)\\&&+ \mathpzc{o}_{\mathbb{P}}({h}^4).\end{eqnarray*}
	
	Therefore,
	\begin{eqnarray*}		\lefteqn{{\mathbb{E}}[\hat{m}_{{h}}({x};2)-m({x})\mid{X}_1,\dots,{X}_n]}\\&=&\dfrac{h^4 \mu_4(K_{(2)})f'(x)}{3!f(x)\ell^2(x)}[{m_2({x})}m^{(3)}_1(x)-{m_1({x})}m^{(3)}_2(x)]\\&&+ \dfrac{h^4 \mu_4(K_{(2)})}{4!\ell^2(x)}[{m_2({x})}m^{(4)}_1(x)-{m_1({x})}m^{(4)}_2(x)]+\mathpzc{o}_{\mathbb{P}}({h}^4).\end{eqnarray*}
	Now, taking into account that 
	\begin{eqnarray}
	m'({x})&=&\dfrac{1}{\ell^2(x)}\left[m'_1({x}) m_2({x})-m'_2({x}) m_1({x})\right],\label{grad_uni}\\
	{m''}({x})&=&	\dfrac{1}{\ell^2(x)}\left[{m''_1}({x})m_2({x})-{m''_2}({x})m_1({x})\right]-\dfrac{2}{\ell(x)}\ell'(x)m'(x),\label{hess_uni}\\
	{m^{(3)}}({x})&=&	\dfrac{1}{\ell^2(x)}\left[{m^{(3)}_1}({x})m_2({x})-{m^{(3)}_2}({x})m_1({x})+{m''_1}({x})m'_2({x})-{m'_1}({x})m''_2({x})\right]\nonumber\\&&- \dfrac{4}{\ell(x)}\ell'(x)m''(x)-\dfrac{2}{\ell^2(x)}\ell'^2(x)m'(x)-\dfrac{2}{\ell(x)}\ell''(x)m'(x),\label{3der_uni}\\
	{m^{(4)}}({x})&=&	\dfrac{1}{\ell^2(x)}\left[{m^{(4)}_1}({x})m_2({x})-{m^{(4)}_2}({x})m_1({x})+2{m^{(3)}_1}({x})m'_2({x})-2{m'_1}({x})m^{(3)}_2({x})\right]\nonumber\\&&- \dfrac{6}{\ell(x)}\ell'(x)m^{(3)}(x)-\dfrac{2}{\ell(x)}\ell^{(3)}(x)m'(x)-\dfrac{6}{\ell(x)}\ell''(x)m''(x)\nonumber\\&&- \dfrac{6}{\ell(x)^2}\ell'^2(x)m''(x)-\dfrac{6}{\ell(x)^2}\ell'(x)\ell''(x)m'(x),\label{4der_uni}
	\end{eqnarray}		
	it follows that
	\begin{eqnarray*}
		\lefteqn{	{\mathbb{E}}[\hat{m}_{{h}}({x};2)-m({x})\mid{X}_1,\dots,{X}_n]}\\&=&\frac{h^{4}\mu_{4}(K_{(2)})f'(x)}{3!f(x)}m^{(3)}(x)\\&&+ \frac{h^{4}\mu_{4}(K_{(2)})f'(x)}{3!f(x)}\left[\dfrac{2\ell''(x)m'(x)}{\ell(x)}+\dfrac{m_2''(x)m_1'(x)-m_1''(x)m_2'(x)]}{\ell^2(x)}\right]\\&&+ \frac{h^{4}\mu_{4}(K_{(2)})f'(x)}{3!f(x)}\left[\dfrac{4\ell'(x)m''(x)}{\ell(x)}+\dfrac{2\ell'^2(x)m'(x)}{\ell^2(x)}\right]\\&&+ \frac{h^4\mu_{4}(K_{(2)})}{4!}m^{(4)}(x)\\&&+ \frac{h^4\mu_{4}(K_{(2)})}{4!}\left[\dfrac{2\ell^{(3)}(x)m'(x)}{\ell(x)}+\dfrac{2m_2^{(3)}(x)m_1'(x)-2m_1^{(3)}(x)m_2'(x)}{\ell^2(x)}\right]\\&&+ \frac{h^4\mu_{4}(K_{(2)})}{4!}\left[\dfrac{6\ell'(x)m^{(3)}(x)+6\ell''(x)m''(x)}{\ell(x)}+\dfrac{6\ell'^2(x)m''(x)+6\ell'(x)\ell''(x)m'(x)}{\ell^2(x)}\right]\\&&+ \mathpzc{o}_{\mathbb{P}}\left(h^{4}\right)
	\end{eqnarray*}

	As for the variance of $\hat{m}_{{h}}({x};2)$, the same arguments as those employed in the proof of Theorem \ref{teoNW} and \ref{teoLL} can be used. The conditional covariance between both $\hat m_{1, h}(x;2)$ and $\hat m_{2, h}(x;2)$ is  $${\mbox{Cov}}[\hat m_{1, h}( x;2),\hat m_{2,h}( x;2)\mid X_1,\ldots, X_n]=\frac{1}{nhf(x)} R(K_{(2)}) c(x)+\mathpzc{o}_{\mathbb{P}}\left(\frac{1}{n h}\right),$$ and the variance of $\hat{m}_{{h}}({x};2)$ is:
	$${\mbox{Var}}[\hat m_{h}(x;2)\mid X_1,\ldots, X_n]=\frac{1}{nh\ell^2(x)f(x)} R(K_{(2)}) \sigma^2_1(x)+\mathpzc{o}_{\mathbb{P}}\left(\frac{1}{n h}\right).$$
	
\end{proof}

\begin{proof}[Proof of Theorem \ref{C_teo3}]To obtain the conditional bias of $\hat{m}_{{h}}({x};3)$,  using the asymptotic properties of the local cubic estimator,  one gets that
	\begin{eqnarray*}		\lefteqn{{\mathbb{E}}[\hat{m}_{{h}}({x};3)-m({x})\mid{X}_1,\dots,{X}_n]}\\&=&\dfrac{h^4 \mu_4(K_{(2)})}{4!}\dfrac{m_2({x})}{m_1^2({x})+m_2^2({x})}m^{(4)}_1(x)-\dfrac{h^4 \mu_4(K_{(2)})}{4!}\dfrac{m_1({x})}{m_1^2({x})+m_1^2({x})}m^{(4)}_2(x)+\mathpzc{o}_{\mathbb{P}}({h}^2)\\&=&\frac{h^4\mu_{4}(K_{(3)})}{4!}m^{(4)}(x)\\&&+ \frac{h^4\mu_{4}(K_{(2)})}{4!}\left[\dfrac{2\ell^{(3)}(x)m'(x)}{\ell(x)}+\dfrac{2m_2^{(3)}(x)m_1'(x)-2m_1^{(3)}(x)m_2'(x)}{\ell^2(x)}\right]\\&&+ \frac{h^4\mu_{4}(K_{(2)})}{4!}\left[\dfrac{6\ell'(x)m^{(3)}(x)+6\ell''(x)m''(x)}{\ell(x)}+\dfrac{6\ell'^2(x)m''(x)+6\ell'(x)\ell''(x)m'(x)}{\ell^2(x)}\right]\\&&+ \mathpzc{o}_{\mathbb{P}}\left(h^{4}\right).\end{eqnarray*}
	
	Reasoning as in the proof of Theorem \ref{C_teo3}, the conditional variance   of $\hat{m}_{{h}}({x};3)$ can be obtained.
\end{proof}

\end{document}